\documentclass[paper]{ieice}
\usepackage[dvips]{graphicx}
\usepackage{theorem}
\usepackage{amsmath}
\usepackage{amssymb}
\usepackage{graphicx}
\usepackage{subfigure}
\theorembodyfont{\rmfamily}   
\newtheorem{theorem}{Theorem}
\newtheorem{lemma}{Lemma}
\newtheorem{definition}{Definition}
\newtheorem{corollary}{Corollary}

\newcommand{\qed}{\hfill$\square$}

\newenvironment{proof}{%
  \noindent{\em Proof.\ }}{%
  \hspace*{\fill}\qed \\
  \vspace{2ex}}

\setcounter{page}{1}

\field{A}
\title{Secret Key Agreement by Soft-decision
of Signals in Gaussian Maurer's Model}
\titlenote{A part of this paper will be presented at 2008 IEEE International
Symposium on Information Theory in Toronto, Canada.}
\authorlist{
 \authorentry[m-naito@sp.m.is.nagoya-u.ac.jp]{Masashi Naito}{n}{labelA}
 \authorentry[shun-wata@it.ss.titech.ac.jp]{Shun Watanabe}{m}{labelB}
 \authorentry[ryutaroh@rmatsumoto.org]{Ryutaroh Matsumoto}{m}{labelB}
 \authorentry[uyematsu@ieee.org]{Tomohiko Uyematsu}{m}{labelB}
}
\affiliate[labelA]{The author is with the Department of Media Science
Graduate School of Information Science, Nagoya University}
\affiliate[labelB]{The authors are with the Department of Communications
and Integrated Sysmtems, Tokyo Institute of Technology }

\received{2008}{4}{1}
\revised{2003}{1}{1}
\finalreceived{2003}{1}{1}



\begin{document}
\maketitle
\begin{summary}
We consider the problem of secret key agreement in Gaussian Maurer's
 Model.  In Gaussian Maurer's model, legitimate receivers, Alice and
 Bob, and a wire-tapper, Eve, receive signals randomly generated by a
 satellite through three independent memoryless Gaussian channels
 respectively.  Then Alice and Bob generate a common secret key from
 their received signals.  In this model, we propose a protocol for
 generating a common secret key by using the result of soft-decision of
 Alice and Bob's received signals.  Then, we calculate a lower bound on
 the secret key rate in our proposed protocol. As a result of comparison
 with the protocol that only uses hard-decision, we found that the
 higher rate is obtained by using our protocol.
\end{summary}
\begin{keywords}
advantage distillation, AWGN, information theoretic security, key
 agreement, privacy amplification, public discussion 
\end{keywords}

\section{Introduction}

As one of fundamental problems in cryptography, we will consider the
problem of secret key agreement in this paper.  That is to say, we will
consider how to generate a common secret key by two parties not sharing
such a key initially in the situation that a wire-tapper has access to
the communication channel between two parties.  Many models of this
problem were presented and and analyzed in the literatures
\cite{wyner:75,cheong:76,cheong:78,csiszar:78}.
Recently, key agreement over wireless channel is experimentally
studied \cite{aono:05}. 

Maurer \cite{maurer:93} and Ahlswede and Csis\'{z}ar \cite{ahlswede:93}
considered the interactive model of secret key agreement from an
initially shared partially secret string by communication over a public
channel.

Maurer \cite{maurer:93} considered the following model.  Two parties,
Alice and Bob, who want to share a secret key, and the wire-tapper, Eve,
receive the bits randomly generated by a satellite over independent
binary symmetric channels (BSC) respectively.  We call this model
Maurer's model.  Maurer \cite{maurer:93} proposed an interactive
protocol in his model, and he showed a lower bound on key rates at which
Alice and Bob can agree a secret key.  Note that the key rate is defined
as length of the secret key generated by Alice and Bob per channel use
by the satellite.

In Maurer's original model and protocol, channels are assumed to be BSC,
and received signals are assumed to be digital signals.  However,
signals in practical channels are analogue.  
Recently, key agreement over wireless channel is experimentally
studied by Aono et al.~\cite{aono:05}. However, information theoretic analysis
of the key agreement over analogue channels has not sufficiently conducted.
In order to close the gap between Maurer's results and the experimental study,
we will modify Maurer's model to use Gaussian channels
instead of BSC, which we call Gaussian Maurer's model.  

In Gaussian
Maurer's model, Alice and Bob can use the results of soft-decision of
analogue received signals.  They can determine the reliability
information from this results and use it for generating a common secret
key.
In this paper, we will propose a protocol for secret key agreement using
the reliability information.  Then, we calculate key rates at which
Alice and Bob can agree a secret key in our proposed protocol.

Considering the situation that Alice, Bob, and Eve hard-detect the
signals that are sent out by the satellite, Maurer's original model can
be seen as the special case of Gaussian Maurer's model.  Thus, we can
compare the protocol in Gaussian Maurer's model and one in BSC Maurer's
model.  In order to show advantage to use reliability information, we
will compare the key rate in our proposed protocol and the key rate in
Maurer's protocol in which Alice and Bob use only hard-decision.  that
uses only hard-decision.  From the result of this comparison, we will
show that the higher key rate is obtained by using our proposed protocol
than the protocol that only uses hard-decision.

Rest of this paper is organized as follows.  In section 2, we will
introduce Maurer's model modified to use Gaussian channels instead of
BSC.  In section 3, we will show our proposed protocol using reliability
information.  In section 4, we will compare our proposed protocol and
Maurer's protocol with hard-decision.  In appendices, we will prove the
lemmas that is needed for the proof of theorem that derives a lower
bound on key rates at which Alice and Bob can agree a secret key.

\section{Secret Key Rate in Gaussian Maurer's Model}

Consider the following key agreement problem, which we call Gaussian
Maurer's model.  Assume that a satellite randomly generates signals and
sends it to two parties Alice and Bob who want to share secret key and
the wire-tapper Eve over three independent memoryless Gaussian channels.
Their noises at time $i$, denoted $N_{A}^{(i)}$, $N_{B}^{(i)}$, and
$N_{E}^{(i)}$, are drawn from independently identically distributed
(i.i.d.) Gaussian distributions with mean $0$ and variances $V_{A}$,
$V_{B}$, and $V_{E}$ respectively.  A sequence of signals that the
satellite generates at time $1$ to $n$, denoted
$U^{n}=[U^{(1)},\dots,U^{(n)}]$, is drawn from a distribution
$P_{U^{n}}$ on a signal set in $\mathbb{R}^{n}$ and this sequence of
signals satisfies power constraint
$\frac{1}{n}\sum_{i=1}^{n}(u^{(i)})^{2} \leq 1$ for all sequences
$u^{n}$.  Alice, Bob, and Eve receive $X^{n}=[X^{(1)},\dots,X^{(n)}]$,
$Y^{n}=[Y^{(1)},\dots,Y^{(n)}]$,and $Z^{n}=[Z^{(1)},\dots,Z^{(n)}]$, as
outputs of these three channels at time $1$ to $n$ respectively.  They
are assumed to know the distribution $P_{U^{n}}$ and noise variances
$V_{A}$, $V_{B}$, and $V_{E}$.  Note that capital letters denote random
variables and corresponding small letters denote realizations in this
paper.

After Alice, Bob, and Eve receive signals, Alice and Bob communicate
over a public channel.  This channel is assumed to be noiseless and
discrete, and its capacity is finite.  Every messages communicated
between Alice and Bob can be intercepted by Eve, but it is assumed that
Eve cannot fraudulent messages nor modify messages on this public
channel without being detected.  Let $C$ be the entire communication
held over this public channel.  After enough communication over the
public channel, Alice computes a secret key $S$ on a key alphabet
$\mathcal{S}$ as a function of her received signals $X^{n}$ and all
information $C$ over the public channel.  In a similar way, Bob computes
a secret key $S'$ on $\mathcal{S}$ as a function of $Y^{n}$ and $C$.
The secret key rate in this model is defined as follows.  Note that we
will take all logarithms to be base 2, and hence all the entropies will
be measured in bits.

\begin{definition}
For given noise variances $V_{A}$, $V_{B}$, and $V_{E}$, a
rate $R$ is said to be \textit{achievable} if for every $\epsilon>0$
there exists a protocol for sufficiently large $n$ satisfying
\begin{gather}
\Pr[S \neq S']\leq \epsilon , \label{eq:agreement}\\ H(S|CZ^{n})\geq
\log|\mathcal{S}|-\epsilon \label{eq:security}
\end{gather}
and
\begin{equation}  
\frac{1}{n}\log|\mathcal{S}|\geq R-\epsilon,
\end{equation}
where $|\mathcal{S}|$ denotes the number of the elements in
$\mathcal{S}$.
\end{definition}

\begin{definition}
The \textit{secret key rate} for given noise variances
$V_{A}$, $V_{B}$, and $V_{E}$, denoted $R_{S}(V_{A},V_{B},V_{E})$, is
the supremum of all achievable rate.
\end{definition}

\section{Secret Key Agreement by Soft-Decision of Signals}

In this section, we will propose a protocol that uses reliability
information of signals and calculate a lower bound on the secret key
rate in this protocol.

In our proposed protocol, the satellite selects input signal $U^{(i)}$
i.i.d. according to a distribution $P_{U}(1)=P_{U}(-1)=\frac{1}{2}$.
Thus, the received signals $X^{(i)},Y^{(i)},Z^{(i)}$ are also
i.i.d. respectively.

Let $a_{1},\dots,a_{K}$ be a positive monotonically increasing sequence,
and let $E_{1},\dots,E_{K}$ be sets, where $j$th level set is defined as
$E_{j}=[-a_{j},a_{j}]$~ $(j=1,\dots,K)$.

The procedures of our proposed protocol is as follows.

\begin{enumerate}
 \item From the received signal $X^{(i)}$ at time $i$, Alice determines
reliability information $W_{A}^{(i)}$ as
\begin{equation}
W_{A}^{(i)}=\begin{cases} 0 & \text{if $X^{(i)} \in E_{1}$}\\ j &
               \text{if $X^{(i)} \in E_{j}^{c} \backslash E_{j+1}^{c}$
               ($j=1,\dots,K$)}\\ K & \text{if $X^{(i)} \in E_{K}^{c}$}
               \notag
            \end{cases} ,
\end{equation}   
where the set $E_{j}^{c}$ is the complementary set of the set $E_{j}$ in
the set of real numbers $\mathbb{R}$, and $E_{j}^{c} \backslash
E_{j+1}^{c} = E_{j}^{c} \cap E_{j+1}$ is the difference set.  Similarly,
from the received signal $Y^{(i)}$ at time $i$, Bob determine
reliability information $W_{B}^{(i)}$ as
\begin{equation}
W_{B}^{(i)}=\begin{cases} 0 & \text{if $Y^{(i)} \in E_{1}$}\\ j &
               \text{if $Y^{(i)} \in E_{j}^{c} \backslash E_{j+1}^{c}$
               ($j=1,\dots,K$)}\\ K & \text{if $Y^{(i)} \in E_{K}^{c}$}
               \notag
            \end{cases} .
\end{equation}   

 \item Alice and Bob send sequences
       $W_{A}^{n}=[W_{A}^{(1)},\dots,W_{A}^{(n)}]$ and
       $W_{B}^{n}=[W_{B}^{(1)},\dots,W_{B}^{(n)}]$ over the public
       channel.  From these messages, they can know the sets containing
       their received signals.

 \item Alice and Bob quantize $X^{n}$ and $Y^{n}$ into discrete random
variables $\Tilde{X}_{\Delta}^{n}$ and $\Tilde{Y}_{\Delta}^{n}$, where
$\Tilde{X}_{\Delta}^{(i)}$ is defined as
\begin{equation}
\Tilde{X}_{\Delta}^{(i)}=\begin{cases} 1 & \text{if $X^{(i)} \geq 0$,}\\
                           0 & \text{if $X^{(i)} <
                           0$,}\label{hard-decision1}
\end{cases}
\end{equation}
and $\Tilde{Y}_{\Delta}^{(i)}$ is similarly defined as
\begin{equation}
\Tilde{Y}_{\Delta}^{(i)}=\begin{cases} 1 & \text{if $Y^{(i)} \geq 0$,}\\
                           0 & \text{if $Y^{(i)} <
                           0$.}\label{hard-decision2}
\end{cases}
\end{equation}
\end{enumerate}
 
For given $(W_{A}^{(i)},W_{B}^{(i)})=(w_{A},w_{B})$, if Eve's ambiguity
$H(\Tilde{X}_{\Delta}|Z,W_{A}=w_{A},W_{B}=w_{B})$ about
$\Tilde{X}_{\Delta}^{(i)}$ is smaller than Bob's ambiguity
$H(\Tilde{X}_{\Delta}|Y,W_{A}=w_{A},W_{B}=w_{B})$ about
$\Tilde{X}_{\Delta}^{(i)}$, then we should discard
$\Tilde{X}_{\Delta}^{(i)}$ in our protocol.  Indeed, if we keep
$\Tilde{X}_{\Delta}^{(i)}$ for such
$(W_{A}^{(i)},W_{B}^{(i)})=(w_{A},w_{B})$, then a negative term is added
to the lower bound on a secret key rate shown in Eq.~(\ref{eq:rate}).
Furthermore, if the difference between Eve and Bob's ambiguity about
$\Tilde{X}_{\Delta}^{(i)}$ is smaller than the difference between Eve's
ambiguity $H(\Tilde{Y}_{\Delta}|Z,W_{A}=w_{A},W_{B}=w_{B})$ about
$\Tilde{Y}_{\Delta}^{(i)}$ and Alice's ambiguity
$H(\Tilde{Y}_{\Delta}|X,W_{A}=w_{A},W_{B}=w_{B})$ about
$\Tilde{Y}_{\Delta}^{(i)}$, we should generate a secret key from
$\Tilde{Y}_{\Delta}^{(i)}$ instead of $\Tilde{X}_{\Delta}^{(i)}$.  For
this purpose, we consider the sets $A,B \subset \{ 1,\dots,K \} \times
\{ 1,\dots,K \}$, which are defined as
\begin{align}
A&=\{(w_{A},w_{B})|\notag\\
 &\mspace{40mu}H(\Tilde{X}_{\Delta}|Z,W_{A}=w_{A},W_{B}=w_{B})\notag\\
 &\mspace{50mu}-H(\Tilde{X}_{\Delta}|Y,W_{A}=w_{A},W_{B}=w_{B})\notag\\
 &\mspace{70mu}\ge\max\{0,H(\Tilde{Y}_{\Delta}|Z,W_{A}=w_{A},W_{B}=w_{B})\notag\\
 &\mspace{130mu}-H(\Tilde{Y}_{\Delta}|X,W_{A}=w_{A},W_{B}=w_{B})\}\},\notag\\
 B&=\{(w_{A},w_{B})|\notag\\
 &\mspace{40mu}H(\Tilde{Y}_{\Delta}|Z,W_{A}=w_{A},W_{B}=w_{B})\notag\\
 &\mspace{50mu}-H(\Tilde{Y}_{\Delta}|X,W_{A}=w_{A},W_{B}=w_{B})\notag\\
 &\mspace{70mu}>\max\{0,H(\Tilde{X}_{\Delta}|Z,W_{A}=w_{A},W_{B}=w_{B})\notag\\
 &\mspace{130mu}-H(\Tilde{X}_{\Delta}|Y,W_{A}=w_{A},W_{B}=w_{B})\}\}.\notag
\end{align}
If given $(W_{A}^{(i)},W_{B}^{(i)})$ is in the set $A$, we use
$\Tilde{X}_{\Delta}^{(i)}$ for generating a secret key, otherwise we
discard $\Tilde{X}_{\Delta}^{(i)}$.  Similarly, if given
$(W_{A}^{(i)},W_{B}^{(i)})$ is in the set $B$, we use
$\Tilde{Y}_{\Delta}^{(i)}$ for generating a secret key, otherwise we
discard $\Tilde{Y}_{\Delta}^{(i)}$.  Thus, we determine discrete random
variables
\begin{equation}
X_{\Delta}^{(i)}=
\begin{cases}
  \Tilde{X}_{\Delta}^{(i)} & \text{if $(W_{A}^{(i)},W_{B}^{(i)})\in
  A$,}\\ 0 & \text{otherwise,}\label{detemination:x}
\end{cases}
\end{equation}
and
\begin{equation}
Y_{\Delta}^{(i)}=
\begin{cases}
  \Tilde{Y}_{\Delta}^{(i)} & \text{if $(W_{A}^{(i)},W_{B}^{(i)})\in
  B$,}\\ 0 & \text{otherwise,}\label{detemination:y}
\end{cases}
\end{equation}
and we use them for generating a secret key instead of
$\Tilde{X}_{\Delta}^{(i)}$ and $\Tilde{Y}_{\Delta}^{(i)}$.
\begin{enumerate}
\setcounter{enumi}{3}
 \item According to the rule in Eq.~(\ref{detemination:x}), Alice
       determines $X_{\Delta}^{n}$ from $W_{A}^{n}$, $W_{B}^{n}$, and
       $\Tilde{X}_{\Delta}^{n}$.  Similarly, Bob determines
       $Y_{\Delta}^{n}$ from $W_{A}^{n}$, $W_{B}^{n}$, and
       $\Tilde{Y}_{\Delta}^{n}$.
 \item Alice sends partial information of $X_{\Delta}^{n}$ as a public
       message $M_{A}$ on $\mathcal{M}_{A}$ in order to share
       $X_{\Delta}^{n}$ with Bob.  Similarly, Bob sends partial
       information of $Y_{\Delta}^{n}$ as a public message $M_{B}$ on
       $\mathcal{M}_{B}$.
 \item Alice decodes $M_{B}$, $X^{n}$, and the reliability information
       $(W_{A},W_{B})$ into the estimation $\hat{Y}_{\Delta}^{n}$.
       Similarly, Bob decodes $M_{A}$, $Y^{n}$, and the reliability
       information $(W_{A},W_{B})$, into the estimation
       $\hat{X}_{\Delta}^{n}$.
 \item Let $\mathcal{F}$ be a set of two-universal hash function
       \cite{carter:79} (see also Appendix \ref{u-hash}) from $\{ 0,1
       \}^{n}\times \{ 0,1 \}^{n}$ to $\mathcal{S}$.  Alice randomly
       choose a hash function $f\in \mathcal{F}$, and publicly tells the
       choice to Bob.  Then, Alice and Bob's final keys are
       $S=f(X_{\Delta}^{n},\hat{Y}_{\Delta}^{n})$ and
       $S'=f(\hat{X}_{\Delta}^{n},Y_{\Delta}^{n})$ respectively.
\end{enumerate}

In order to guarantee that Alice and Bob can compute the same key in
step 6, we set the rate $\frac{1}{n}\log|\mathcal{M}_{A}|$ and
$\frac{1}{n}\log|\mathcal{M}_{B}|$ of public messages according to the
following lemma, which is derived by modifying ``Slepian-Wolf Coding''
\cite{slepian:73} for continuous random variables.

\begin{lemma}
Suppose that we set
\begin{equation}
\frac{1}{n}\log|\mathcal{M}_{A}| >
H(X_{\Delta}|YW_{A}W_{B})\label{eq:compression1}
\end{equation}
and
\begin{equation}
\frac{1}{n}\log|\mathcal{M}_{B}| >
H(Y_{\Delta}|XW_{A}W_{B}),\label{eq:compression2}
\end{equation}
then there exist encoders and decoders such that the decoding error
probabilities $\Pr\{\hat{X}_{\Delta}^{n}\neq X_{\Delta}^{n}\}$ and
$\Pr\{\hat{Y}_{\Delta}^{n}\neq Y_{\Delta}^{n}\}$ tend to $0$ as $n \to
\infty$.

\end{lemma}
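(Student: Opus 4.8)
The plan is to prove the two reconstruction claims separately by a random-binning (Slepian--Wolf) argument; since the two statements are symmetric in the roles of Alice and Bob, I would argue only the bound for $X_\Delta^n$ associated with Eq.~(\ref{eq:compression1}) and obtain the statement for $Y_\Delta^n$ from Eq.~(\ref{eq:compression2}) by the identical argument with $X$ and $Y$ interchanged. The situation is exactly source coding of the discrete source $X_\Delta^n$ with side information $(Y^n,W_A^n,W_B^n)$ available only at the decoder (Bob): Alice's message $M_A$ is the ``bin index'' of $X_\Delta^n$, and Bob must recover $X_\Delta^n$ from this index together with his own observations. The only departure from the textbook theorem is that the conditioning variable $Y^n$ is continuous, so I must replace the discrete typical set by a mixed discrete--continuous one; this is the sense in which Slepian--Wolf coding is \emph{modified for continuous random variables}.

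First I would fix the encoder: assign to each $x_\Delta^n \in \{0,1\}^n$ a bin index drawn independently and uniformly from $\{1,\dots,|\mathcal{M}_A|\}$, and let $M_A$ be the index of the realized $X_\Delta^n$. Next I would define the decoder by joint typicality. Because $X_\Delta$, $W_A$, $W_B$ are discrete and only $Y$ is continuous, the information quantity $H(X_\Delta|YW_AW_B)$ is finite and well defined through the conditional probability $P(x_\Delta\mid y,w_A,w_B)$, and the per-letter variables $(X_\Delta^{(i)},Y^{(i)},W_A^{(i)},W_B^{(i)})$ are i.i.d.\ by construction of the protocol. I would therefore introduce the weakly typical set $A_\epsilon^{(n)}$ of those $(x_\Delta^n,y^n,w_A^n,w_B^n)$ for which $-\frac1n\log P(x_\Delta^n\mid y^n w_A^n w_B^n)$ lies within $\epsilon$ of $H(X_\Delta|YW_AW_B)$, together with the analogous conditions needed below. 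By the weak law of large numbers applied to the i.i.d.\ sum $-\frac1n\sum_i\log P(X_\Delta^{(i)}\mid Y^{(i)}W_A^{(i)}W_B^{(i)})$, this set has probability tending to $1$. Given $M_A=m$ and side information $(y^n,w_A^n,w_B^n)$, Bob outputs the unique $\hat X_\Delta^n$ in bin $m$ that is jointly typical with his side information, declaring an error otherwise.

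The error analysis then splits into two events. The first is that the true $(X_\Delta^n,Y^n,W_A^n,W_B^n)$ fails to be jointly typical; its probability vanishes by the law of large numbers just invoked. The second is that some competing sequence $\tilde x_\Delta^n\neq X_\Delta^n$ falls in the same bin and is also jointly typical with $(Y^n,W_A^n,W_B^n)$. By the union bound and the independence of the binning from the source, the probability of this event is at most the number of sequences conditionally typical with the side information times the collision probability $1/|\mathcal{M}_A|$. The counting step is the crux: I would show that, for any fixed side information, the number of $x_\Delta^n$ with $(x_\Delta^n,y^n,w_A^n,w_B^n)\in A_\epsilon^{(n)}$ is at most $2^{n(H(X_\Delta|YW_AW_B)+\epsilon)}$, since each such sequence carries conditional probability at least $2^{-n(H(X_\Delta|YW_AW_B)+\epsilon)}$ and these probabilities sum to at most $1$. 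Hence the second error probability is bounded by $2^{-n(\frac1n\log|\mathcal{M}_A|-H(X_\Delta|YW_AW_B)-\epsilon)}$, which tends to $0$ precisely under the hypothesis Eq.~(\ref{eq:compression1}) once $\epsilon$ is chosen smaller than the slack. Averaging the total error over the random binning and expurgating then yields a deterministic encoder/decoder with $\Pr\{\hat X_\Delta^n\neq X_\Delta^n\}\to 0$.

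The hard part, and the only place the continuity of $Y^n$ really bites, is making the conditional-typicality counting rigorous when the conditioning variable is continuous: one cannot count $y^n$ sequences, only $x_\Delta^n$ sequences for a fixed $y^n$. I expect to handle this exactly as sketched, exploiting that the reconstructed source $X_\Delta$ is discrete so that conditional \emph{probabilities} (not densities) appear in the bound and sum to at most one; an equally valid alternative would be to quantize $Y^n$ finely, invoke the ordinary discrete Slepian--Wolf theorem at rate $H(X_\Delta|Y_qW_AW_B)$, and pass to the limit using $H(X_\Delta|Y_qW_AW_B)\downarrow H(X_\Delta|YW_AW_B)$ together with the strict inequality in Eq.~(\ref{eq:compression1}). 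Either route gives the claimed vanishing of both decoding error probabilities.
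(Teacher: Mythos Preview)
Your proposal is correct and follows essentially the same argument as the paper: random binning for the encoder, a conditional-typicality decoder based on $-\frac{1}{n}\log P_{X_\Delta^n|Y^n\mathbf{W}^n}$, the two-event error decomposition handled by the weak law of large numbers and by the counting bound $|S_n(y^n,\mathbf{w}^n)|\le 2^{n(H(X_\Delta|Y\mathbf{W})+\gamma)}$ (using that $X_\Delta$ is discrete so conditional probabilities sum to at most one), followed by averaging over the random code to extract a deterministic encoder/decoder. The only cosmetic difference is that the paper uses a one-sided typicality condition rather than your two-sided $A_\epsilon^{(n)}$, and it does not need the quantization-of-$Y$ alternative you sketch.
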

Thus, Eq.~(\ref{eq:agreement}) is satisfied for sufficiently large $n$.

In order to guarantee the security of the protocol, we set the key rate
$\frac{1}{n}\log|S|$ according to the following lemma, which is derived
by modifying the so-called ``left over hash lemma''
\cite{bennett:88,impagliazzo:89,bennett:95} for continuous random
variables.

\begin{lemma}
Suppose that we set
\begin{equation}
\frac{1}{n}\log|S|<
H(X_{\Delta}Y_{\Delta}|ZW_{A}W_{B})-\frac{1}{n}\log|\mathcal{M}_{A}||\mathcal{M}_{B}|,\label{eq:lower
bound}
\end{equation}
then
\begin{equation}
\label{eq:security2} H(S|Z^{n}W_{A}^{n}W_{B}^{n}M_{A}M_{B}F)\ge
\log|S|-\epsilon
\end{equation}
 is satisfied for sufficiently large $n$.
\end{lemma}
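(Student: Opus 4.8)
The plan is to adapt the privacy amplification (\emph{left over hash}) argument to the present situation, where the string to be hashed, $V^{n}=(X_{\Delta}^{n},Y_{\Delta}^{n})\in\{0,1\}^{n}\times\{0,1\}^{n}$, is discrete, but Eve's side information contains the continuous observation $Z^{n}$. Collecting everything available to Eve after the public discussion into $T^{n}=(Z^{n},W_{A}^{n},W_{B}^{n},M_{A},M_{B},F)$, the goal is precisely to show that the hashed value $S=F(V^{n})$ is almost uniform given $T^{n}$, i.e. that the gap $\log|\mathcal{S}|-H(S\mid T^{n})$ can be made smaller than $\epsilon$. Note that, by Lemma~1, reconciliation succeeds with high probability, so up to a vanishing error both final keys coincide with $F(V^{n})$ and it suffices to analyse $S=F(V^{n})$.

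First I would recall the conditional form of the left over hash lemma: since $F$ is drawn from a two-universal family $\mathcal{F}$ into $\mathcal{S}$ independently of $(V^{n},Z^{n}W_{A}^{n}W_{B}^{n}M_{A}M_{B})$, the security gap is controlled by the conditional collision (R\'enyi-$2$) entropy of $V^{n}$ given the remaining data, through a bound of the shape
\[
\log|\mathcal{S}|-H(S\mid Z^{n}W_{A}^{n}W_{B}^{n}M_{A}M_{B}F)\le\frac{|\mathcal{S}|\,2^{-H_{2}(V^{n}\mid Z^{n}W_{A}^{n}W_{B}^{n}M_{A}M_{B})}}{\ln 2}.
\]
Hence it is enough to show that the exponent $H_{2}(V^{n}\mid Z^{n}W_{A}^{n}W_{B}^{n}M_{A}M_{B})$ exceeds $\log|\mathcal{S}|$ by an amount growing linearly in $n$; then the right-hand side is driven to $0$ and Eq.~(\ref{eq:security2}) follows.

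Next I would estimate this conditional collision entropy. The two reconciliation messages are deterministic functions of $V^{n}$ taking at most $|\mathcal{M}_{A}|$ and $|\mathcal{M}_{B}|$ values, so conditioning on them costs at most their log-cardinalities,
\[
H_{2}(V^{n}\mid Z^{n}W_{A}^{n}W_{B}^{n}M_{A}M_{B})\ge H_{2}(V^{n}\mid Z^{n}W_{A}^{n}W_{B}^{n})-\log|\mathcal{M}_{A}|-\log|\mathcal{M}_{B}|.
\]
For the remaining term I would invoke a conditional asymptotic equipartition argument: since the symbols $(X_{\Delta}^{(i)},Y_{\Delta}^{(i)},Z^{(i)},W_{A}^{(i)},W_{B}^{(i)})$ are i.i.d., on the jointly typical set the conditional probabilities $P_{V^{n}\mid Z^{n}W_{A}^{n}W_{B}^{n}}$ lie within an exponential factor of $2^{-nH(X_{\Delta}Y_{\Delta}\mid ZW_{A}W_{B})}$, so the conditional collision probability is essentially $2^{-nH(X_{\Delta}Y_{\Delta}\mid ZW_{A}W_{B})}$ and therefore $\tfrac1n H_{2}(V^{n}\mid Z^{n}W_{A}^{n}W_{B}^{n})\to H(X_{\Delta}Y_{\Delta}\mid ZW_{A}W_{B})$. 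The atypical part carries vanishing mass and is absorbed into $\epsilon$ by the usual smoothing step. Combining the two displays, the collision-entropy rate of $V^{n}$ given all of Eve's data tends to $H(X_{\Delta}Y_{\Delta}\mid ZW_{A}W_{B})-\tfrac1n\log(|\mathcal{M}_{A}||\mathcal{M}_{B}|)$, which is exactly the right-hand side of the hypothesis~(\ref{eq:lower bound}). Thus the assumed strict inequality on $\tfrac1n\log|\mathcal{S}|$ leaves a positive gap, the exponent grows linearly in $n$, and the security bound holds for $n$ large.

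The step I expect to be the main obstacle---and the only genuinely new ingredient relative to the discrete theory---is making the AEP and the collision-entropy estimate rigorous with the \emph{continuous} variable $Z^{n}$ in the conditioning. All sums over $z$ must be replaced by integrals against the Gaussian conditional densities, and one must verify that the set of atypical $(z^{n},w_{A}^{n},w_{B}^{n})$, on which the conditional law of $V^{n}$ departs from the equipartition estimate, has vanishing probability. This is precisely the ``modification for continuous random variables'' the statement refers to, and I would isolate it as the appendix lemma; once it is in hand, the two-universality bound and the message-rate accounting above are routine.
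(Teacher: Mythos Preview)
Your overall strategy---leftover hash lemma plus an AEP-type estimate plus the $\log|\mathcal{M}_A|+\log|\mathcal{M}_B|$ accounting for the reconciliation messages---is the same as the paper's, and the observation that the only new technical point is handling the continuous conditioning variable $Z^{n}$ is correct. However, the argument as written has a genuine gap at the collision-entropy step.

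The claim that
\[
\tfrac{1}{n}\,H_{2}\bigl(V^{n}\mid Z^{n}W_{A}^{n}W_{B}^{n}\bigr)\;\longrightarrow\; H(X_{\Delta}Y_{\Delta}\mid ZW_{A}W_{B})
\]
is false for the \emph{unsmoothed} collision entropy. For an i.i.d.\ source the conditional collision probability factorises, so $\tfrac{1}{n}H_{2}(V^{n}\mid Z^{n}\mathbf{W}^{n})=H_{2}(V\mid Z\mathbf{W})$, and $H_{2}(V\mid Z\mathbf{W})<H(V\mid Z\mathbf{W})$ strictly unless the conditional law is uniform. Thus the Bennett--Brassard--Cr\'epeau--Maurer bound you quote only yields the rate $H_{2}(X_{\Delta}Y_{\Delta}\mid ZW_{A}W_{B})-\tfrac{1}{n}\log(|\mathcal{M}_A||\mathcal{M}_B|)$, not the Shannon rate in the lemma. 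The sentence ``the atypical part carries vanishing mass and is absorbed into $\epsilon$ by the usual smoothing step'' misidentifies the role of smoothing: the atypical set has small \emph{probability} but can dominate the collision sum $\sum_{v}P(v\mid z)^{2}$, so it is precisely the smoothing that converts $H_{2}$ into $H$, not a cosmetic cleanup afterwards.

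The paper fixes this by working in variational distance and making the smoothing explicit. For each $(z^{n},\mathbf{w}^{n})$ it restricts to the conditionally typical set
\[
A_{n}=\Bigl\{\mathbf{r}^{n}:\,-\tfrac{1}{n}\log P_{\mathbf{R}^{n}\mid z^{n}\mathbf{w}^{n}}(\mathbf{r}^{n})\ge \alpha\Bigr\},
\qquad \alpha=H(X_{\Delta}Y_{\Delta}\mid Z\mathbf{W})-\delta,
\]
bounds the contribution of $A_{n}^{c}$ by $2P_{\mathbf{R}^{n}\mid z^{n}\mathbf{w}^{n}}(A_{n}^{c})$ (which vanishes by the weak law/Chernoff after integrating over $z^{n}$), and on $A_{n}$ applies Cauchy--Schwarz together with two-universality to obtain the term $\sqrt{|\mathcal{S}|\,|\mathcal{M}_A\times\mathcal{M}_B|\,2^{-\alpha n}}$. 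Only then does it pass from variational distance to the entropy statement via a continuity-of-entropy lemma. To repair your route, either replace $H_{2}$ by the \emph{smooth} collision (or min-) entropy and invoke the smooth-entropy AEP, or---equivalently and as the paper does---carry the typical-set truncation through the variational-distance bound before converting to entropy.
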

Note that $F$ is a random variable on ${\cal F}$, and all information
$C$ over the public channel correspond to
$(W_{A}^{n},W_{B}^{n},M_{A},M_{B},F)$ in this case.

From Eqs.~(\ref{eq:compression1})--(\ref{eq:lower bound}), we obtain the
following theorem that gives a lower bound on secret key rate
$R_{S}(V_{A},V_{B},V_{E})$ in this protocol.  \setcounter{theorem}{0}
\begin{theorem}
\label{theorem_lb} By using our proposed protocol, we achieve the lower
 bound on the secret key rate $R_{S}(V_{A},V_{B},V_{E})$ as
\begin{eqnarray}
\lefteqn{R_{S}(V_{A},V_{B},V_{E})}\notag\\ &\geq&
 H(X_{\Delta}Y_{\Delta}|ZW_{A}W_{B})
 -H(X_{\Delta}|YW_{A}W_{B}) \notag \\
&& \hspace{34mm}  -H(Y_{\Delta}|XW_{A}W_{B}).\label{eq:rate}
\end{eqnarray}
\end{theorem}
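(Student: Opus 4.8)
\emph{Proof plan.} The strategy is to read off the achievability bound by combining Lemma~1 and Lemma~2 and then checking the three requirements of Definition~1. Because the satellite generates the $U^{(i)}$ i.i.d., all of the derived variables $(X_\Delta,Y_\Delta,W_A,W_B,Z)$ are i.i.d. across time, so the block conditional entropies single-letterize, e.g.\ $H(X_\Delta^n\mid Y^nW_A^nW_B^n)=nH(X_\Delta\mid YW_AW_B)$; this is precisely what lets the per-symbol rates appearing in the two lemmas govern a length-$n$ code.

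First I would fix an arbitrary target rate
\[
R < H(X_\Delta Y_\Delta\mid ZW_AW_B)-H(X_\Delta\mid YW_AW_B)-H(Y_\Delta\mid XW_AW_B)
\]
together with a small $\delta>0$ whose triple $3\delta$ is less than the slack above, and set the public-message rates just above the relevant conditional entropies,
\[
\tfrac{1}{n}\log|\mathcal{M}_A|=H(X_\Delta\mid YW_AW_B)+\delta,\qquad
\tfrac{1}{n}\log|\mathcal{M}_B|=H(Y_\Delta\mid XW_AW_B)+\delta.
\]
By Lemma~1 this makes $\Pr\{\hat{X}_\Delta^n\neq X_\Delta^n\}$ and $\Pr\{\hat{Y}_\Delta^n\neq Y_\Delta^n\}$ vanish as $n\to\infty$. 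Since $S=f(X_\Delta^n,\hat{Y}_\Delta^n)$ and $S'=f(\hat{X}_\Delta^n,Y_\Delta^n)$ agree whenever both decoders succeed, a union bound gives $\Pr[S\neq S']\to0$, so \eqref{eq:agreement} holds for large $n$.

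Next I would substitute these message rates into Lemma~2: the right-hand side of \eqref{eq:lower bound} becomes
\[
H(X_\Delta Y_\Delta\mid ZW_AW_B)-H(X_\Delta\mid YW_AW_B)-H(Y_\Delta\mid XW_AW_B)-2\delta,
\]
which exceeds $R$ by the choice of $\delta$. Choosing $\tfrac{1}{n}\log|\mathcal{S}|$ strictly between $R$ and this quantity makes Lemma~2 applicable, so the security bound \eqref{eq:security2} holds; identifying $C=(W_A^n,W_B^n,M_A,M_B,F)$ this is exactly \eqref{eq:security}, and the rate requirement $\tfrac{1}{n}\log|\mathcal{S}|\ge R-\epsilon$ is immediate. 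Hence every such $R$ is achievable, and taking the supremum over $R$ yields \eqref{eq:rate}.

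I do not expect the theorem itself to be the difficult part: granted the two lemmas, it reduces to bookkeeping with the i.i.d.\ entropy identities plus a limiting argument in $\delta$. The genuine technical obstacles live inside the lemmas — extending Slepian--Wolf coding and the leftover hash lemma to the continuous side information $X,Y,Z$ — which is why they are deferred to the appendices. The only point demanding care at the theorem level is that the discarded symbols (those with $(W_A^{(i)},W_B^{(i)})$ outside $A$, resp.\ $B$, which are deterministically set to $0$ in \eqref{detemination:x}--\eqref{detemination:y}) carry no information and so do not disturb the single-letter conditional-entropy accounting.
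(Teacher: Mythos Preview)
Your proposal is correct and matches the paper's approach exactly: the paper itself derives Theorem~\ref{theorem_lb} simply by stating ``From Eqs.~(\ref{eq:compression1})--(\ref{eq:lower bound}), we obtain the following theorem,'' i.e.\ by combining Lemma~1 and Lemma~2 precisely as you outline, with all substantive work deferred to the appendices. Your write-up is in fact more explicit than the paper's about the $\delta$-slack bookkeeping and the union bound for $\Pr[S\neq S']$, but the underlying argument is identical.
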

Note that from the rule in
Eqs.~(\ref{detemination:x})--(\ref{detemination:y}).  ,we can rewrite
the Eq.~(\ref{eq:rate}) as
\begin{eqnarray*}
\lefteqn{H(X_{\Delta}Y_{\Delta}|ZW_{A}W_{B})-H(X_{\Delta}|YW_{A}W_{B}) }
 \notag \\
&& \hspace{30mm} -H(Y_{\Delta}|XW_{A}W_{B}) \notag\\
&=&\sum_{w_{A},w_{B}}P_{W_{A}W_{B}}(w_{A},w_{B})\notag\\ &&\times
\max\{0, H(\Tilde{X}_{\Delta}|Z,W_{A}=w_{A},W_{B}=w_{B}) \\
&& ~~~~~~~~~~~~-H(\Tilde{X}_{\Delta}|Y,W_{A}=w_{A},W_{B}=w_{B}),\notag\\
&&\mspace{70mu}H(\Tilde{Y}_{\Delta}|Z,W_{A}=w_{A},W_{B}=w_{B}) \\
&& ~~~~~~~~~~~~-H(\Tilde{Y}_{\Delta}|X,W_{A}=w_{A},W_{B}=w_{B})\}\notag.
\end{eqnarray*}
For fixed $(W_{A},W_{B})=(w_{A},w_{B})$,
$H(\Tilde{X}_{\Delta}|Z,W_{A}=w_{A},W_{B}=w_{B})
-H(\Tilde{X}_{\Delta}|Y,W_{A}=w_{A},W_{B}=w_{B})$ is lower bound on the
secret key rate when we use only $\Tilde{X}_{\Delta}^{n}$ for generating
a secret key, $H(\Tilde{Y}_{\Delta}|Z,W_{A}=w_{A},W_{B}=w_{B})
-H(\Tilde{Y}_{\Delta}|X,W_{A}=w_{A},W_{B}=w_{B})$ is lower bound on the
secret key rate when we use only $\Tilde{Y}_{\Delta}^{n}$ for generating
a secret key, and $0$ is trivial lower bound on the secret key.  By the
rule in Eqs.~(\ref{detemination:x})--(\ref{detemination:y}), we choose
the maximum among these lower bounds on secret key rate for each
$(w_{A},w_{B})$ in order to make the lower bound on the secret key rate
as high as possible.

Note that encoding in step 5 and decoding in step 6 are implementable by
using low-density parity check codes \cite{muramatsu:05,coleman:06}.

\section{Comparison to a Protocol with Hard-Decision}

In this section, we will show the relation between signal-to-noise ratio
(SNR) and the key rate achieved by our proposed protocol for several
noise-to-noise ratio (NNR).  We will also show the comparisons between
the key rate achieved by our proposed protocol and the key rate achieved
by the protocol that Alice and Bob use only hard-decision for generating
a secret key.

\begin{figure}
 \centering \includegraphics[width=\linewidth]{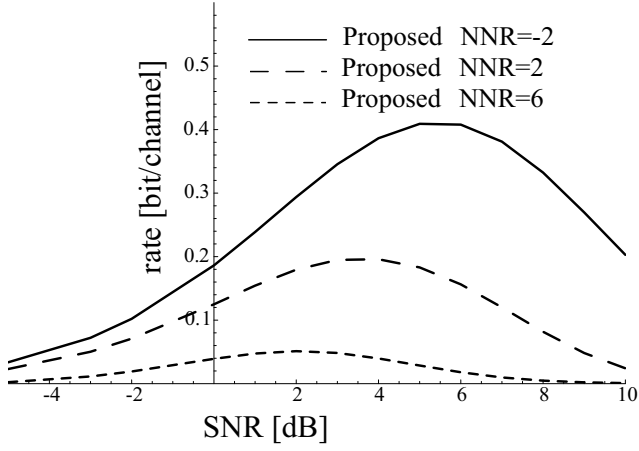} \caption{
 The relation between SNR and the key rate in our proposed protocol for
 several NNR.  } \label{SNR_NNR}
\end{figure}

The relation between (SNR) and the key rate achieved by our proposed
protocol for several NNR is presented in Fig.~\ref{SNR_NNR}, where sets
$E_{1}$, $E_{2}$, and $E_{3}$ are determined from fixed
$a_{1}=\frac{1}{3},a_{2}=\frac{2}{3},a_{3}=1$ in our proposed protocol.
Note that SNR is defined as $\frac{1}{V_{A}}$ and NNR is defined as
$\frac{V_{E}}{V_{B}}$, and we assume $V_{A}=V_{B}$.  From this figure,
we observe that we do not obtain a high key rate when SNR is too high or
too low.

\begin{figure} 
 \centering \subfigure[SNR$=1\text{[dB]}$]{ \label{vs_BSC1}
 \includegraphics[width=\linewidth]{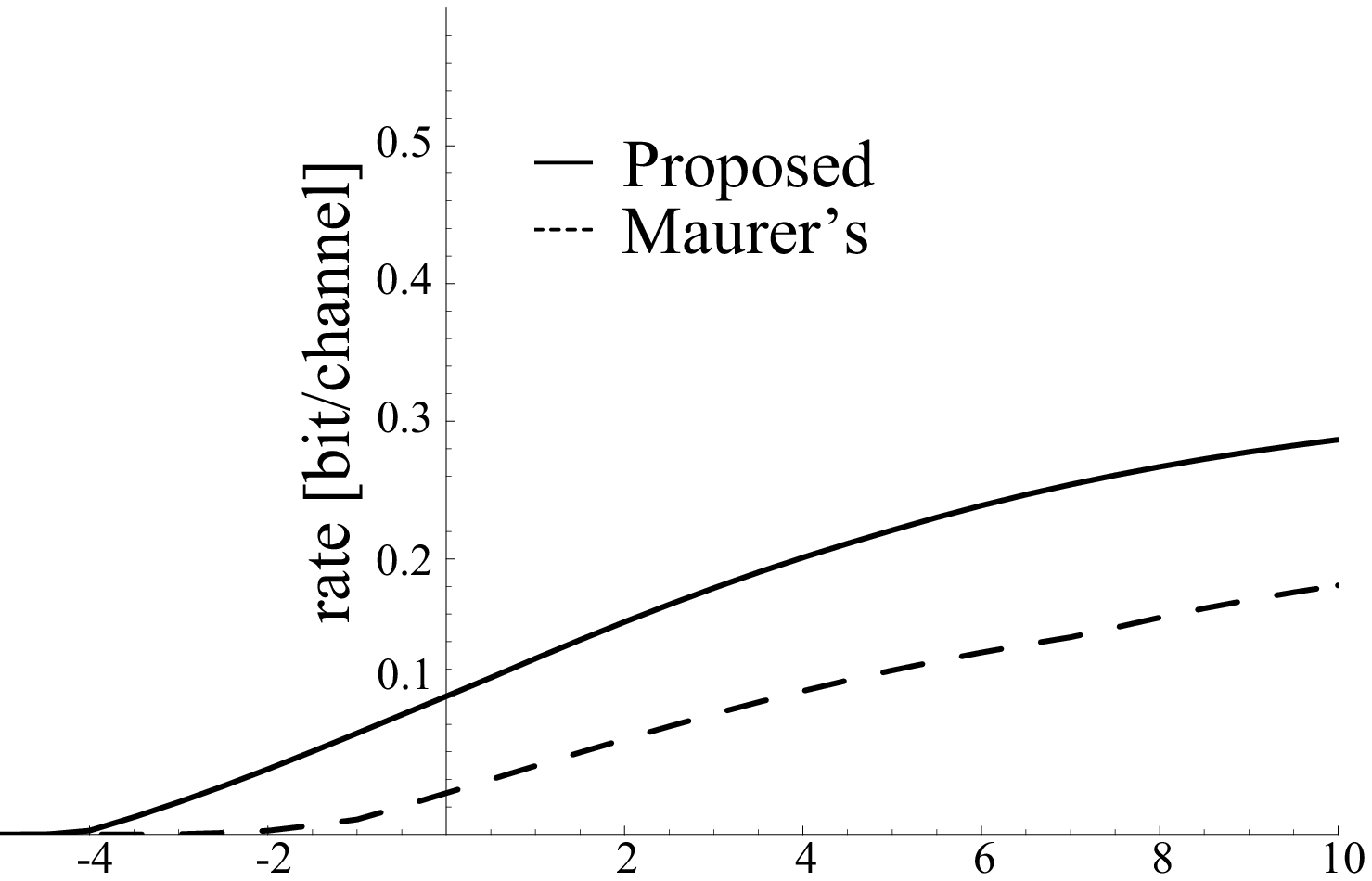}} \subfigure[SNR$=5\text{[dB]}$]{
 \label{vs_BSC2} \includegraphics[width=\linewidth]{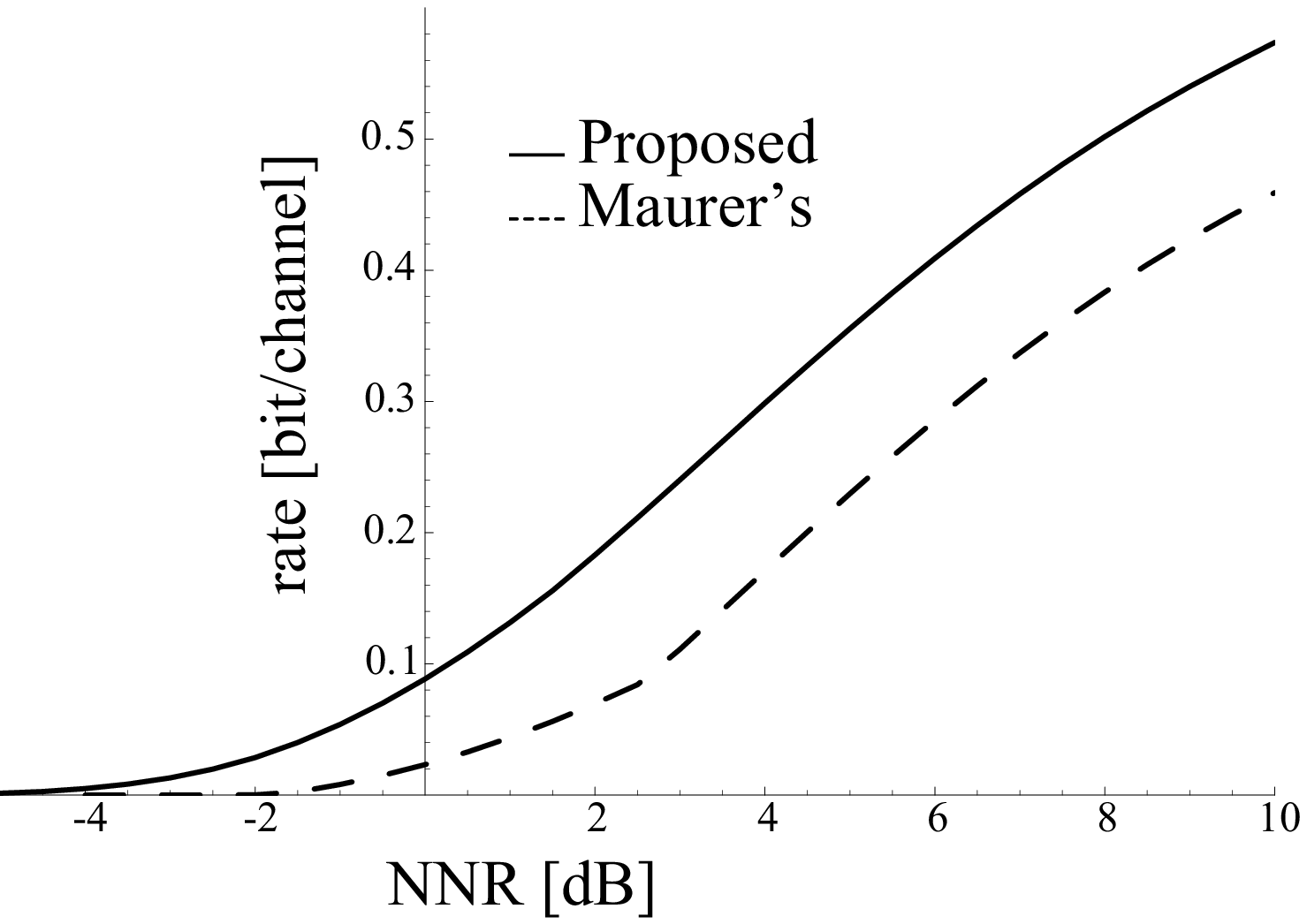}}
 \subfigure[SNR$=7\text{[dB]}$]{ \label{vs_BSC3}
 \includegraphics[width=\linewidth]{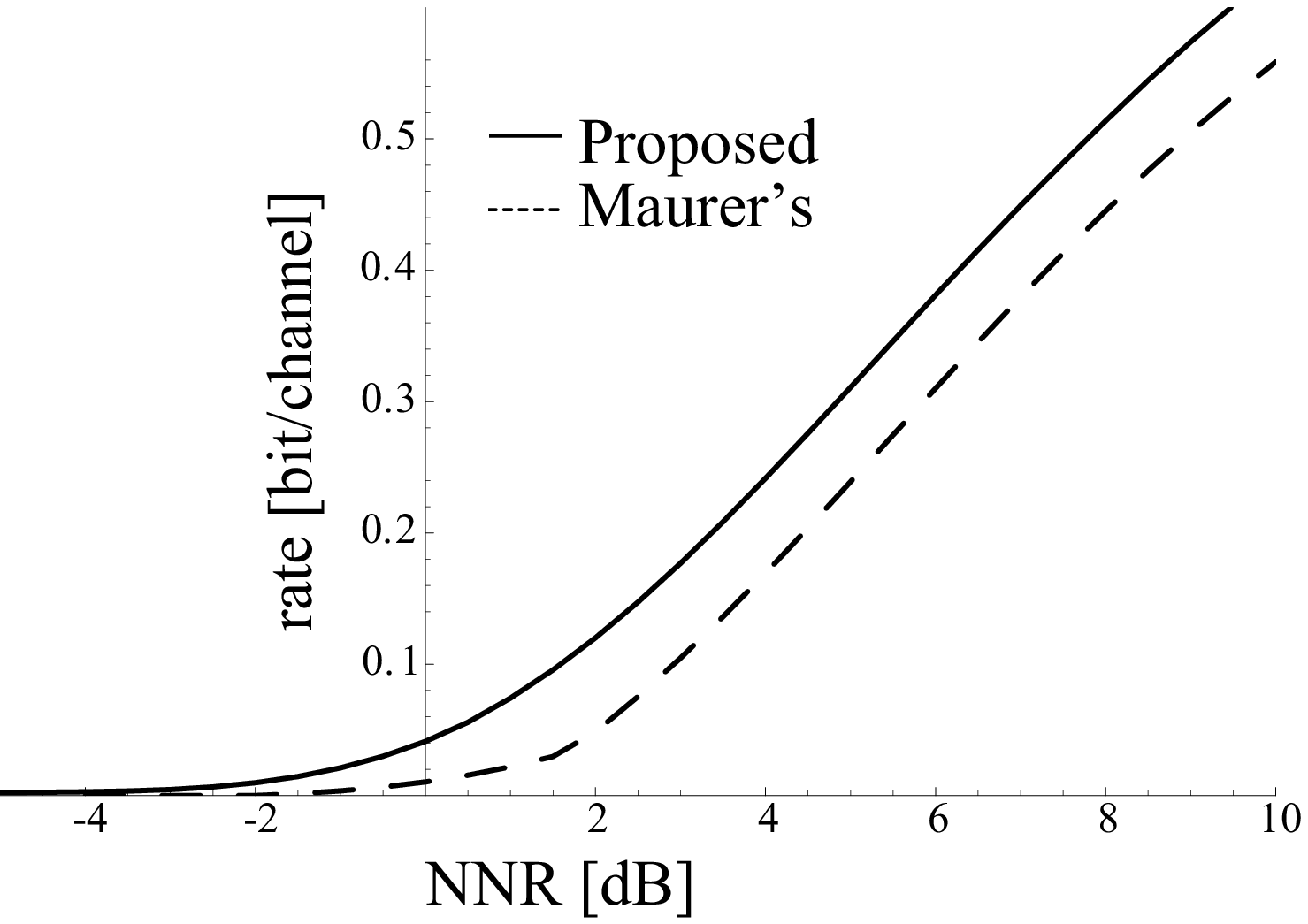}} \label{vs_BSC} \caption{
 The key rates achieved by our proposed protocol and Maurer's protocol.
 }
\end{figure}

In order to show advantage to use soft-decision, we compare the key rate
achieved by our proposed protocol and the key rate achieved by Maurer's
protocol in which Alice and Bob use only hard-decision for generating a
secret key.  The result of this comparison is presented in
Figs.~\ref{vs_BSC1}--\ref{vs_BSC3}.  In this comparison sets $E_{1}$,
$E_{2}$, and $E_{3}$ are determined from fixed
$a_{1}=\frac{1}{3},a_{2}=\frac{2}{3},a_{3}=1$ in our proposed protocol,
and the block length of repetition code used in Maurer's protocol is
optimally selected from $1$ to $10$ for each NNR.  From these figures,
we observe that we obtain a larger key rate by our proposed protocol
than by Maurer's protocol with all value of NNR.  Note that in Gaussian
Maurer's model, we should calculate the key rate by Maurer's protocol
for Eve who can use continuous random variables $Z^{n}$ to guess the
secret key.  However, the numerical calculation of the key rate by
Maurer's protocol in Gaussian Maurer's model is difficult when the block
length of repetition code used in his protocol is $2$ or larger.  Thus,
we calculate the key rate in BSC Maurer's model instead of Gaussian
Maurer's model when the block length of repetition code used in his
protocol is $2$ or larger.  In the calculation of the key rate in BSC
Maurer's model, we consider the situation that Alice, Bob, and Eve
hard-detect received signals according to the similar rule as in
Eqs. (\ref{hard-decision1}) and (\ref{hard-decision2}).  In this
situation, we can convert three Gaussian channels into independent
binary symmetric channels with error probabilities
$\epsilon_{A},\epsilon_{B},\epsilon_{E}$ given by
\begin{align}
\epsilon_{A}&=\frac{1}{2}\mathit{erfc}\Big(\sqrt{\frac{1}{V_{A}}}\Big),&
\epsilon_{B}&=\frac{1}{2}\mathit{erfc}\Big(\sqrt{\frac{1}{V_{B}}}\Big),&\notag\\
\epsilon_{E}&=\frac{1}{2}\mathit{erfc}\Big(\sqrt{\frac{1}{V_{E}}}\Big),
\end{align}
where the complementary error function $\mathit{erfc}(z)$ is defined as
\begin{equation}
\mathit{erfc}(z)=\frac{2}{\sqrt{\pi}}\int_{z}^{\infty}e^{-t^{2}}.
\end{equation}
Note that this way of the comparison gives Maurer's protocol advantage
because a wire-tapper in Gaussian Maurer's model is more powerful than
in BSC Maurer's model.\footnote{The wire-tapper in BSC Maurer's model
can use continuous random variables $Z^{n}$ to guess the secret key, but
one in BSC Maurer's model can only use quantized versions of them.}
Hence, the key rate achieved by Maurer's protocol in Gaussian Maurer's
model is lower than that presented in
Figs.~\ref{vs_BSC1}--\ref{vs_BSC3}.  

\section{Conclusion}

In this paper, we have proposed Gaussian Maurer's model and the protocol
with reliability information based on the result of the soft-decision in
this model.  As a result, we have obtained a higher key rate than
Maurer's protocol.  This is because that the correlation between
$X_{\Delta}$ in Eq.~(\ref{detemination:x}) and $Y$ and between
$Y_{\Delta}$ in Eq.~(\ref{detemination:y}) and $X$ obtained by using the
reliability information is stronger than the correlation between
$\Tilde{X}_{\Delta}$ in Eq.~(\ref{hard-decision1}) and
$\Tilde{Y}_{\Delta}$ in Eq.~(\ref{hard-decision2}) obtained by using the
hard-decision.

However, we do not know the optimal way to determine sets
$E_{1},\dots,E_{K}$ and its number $K$.  Intuitively, one may think that
the more sets we use, the higher rate we obtain.  However, this
intuition does not seem to be always true.  Actually, there exists the
case that we cannot obtain higher key rate though we use many sets.
Furthermore, we have to find the optimal signal constellation used by
the satellite.  These problems are future research agenda.

\section*{Acknowledgments}

We would like to thank Dr.~Jun Muramatsu for 
valuable discussions. This research also partly supported
by the Japan Society for the Promotion of Science under
Grants-in-Aid No.~00197137.

\appendix

\section{Proof of lemma 1}

\label{proof_lemma1} We only prove that if we set the rate
$\frac{1}{n}\log|\mathcal{M}_{A}|$ of public message according to
Eq.~(\ref{eq:compression1}), then there exist encoders and decoders such
that the decoding error probabilities $\Pr\{\hat{X}_{\Delta}^{n}\neq
X_{\Delta}^{n}\}$ tends to $0$ as $n\to \infty$.  The proof for the rate
$\frac{1}{n}\log|\mathcal{M}_{B}|$ of public message follows by
symmetry.

We use the so-called ``bin coding'' proposed by Cover \cite{cover} in
this proof.  The procedures of bin coding is as follows.

Assign every $x_{\Delta}^{n} \in \mathcal{X}_{\Delta}^{n}$ to one of
$|\mathcal{M}_{A}|$ bins independently according to the uniform
distribution on $\mathcal{M}_{A}$.

Alice sends the index $i$ of the bin to which $x_{\Delta}^{n}$ belongs.
Then let $\Bar{\varphi}_{n}(x_{\Delta}^{n})=i$.

For each $(y^{n},\mathbf{w}^{n})$, we define the set
$S_{n}(y^{n},\mathbf{w}^{n})\subset \mathcal{X}_{\Delta}^{n}$ as
\begin{eqnarray*}
\lefteqn{ S_{n}(y^{n},\mathbf{w}^{n}) } \\
&:=& \biggl\{ x_{\Delta}^{n} :
\frac{1}{n}\log
\frac{1}{P_{X_{\Delta}^{n}|Y^{n},\mathbf{W}^{n}}(x_{\Delta}^{n}|y^{n},\mathbf{w}^{n})} \\
&& ~~~~~~~~~~~~~~~~~\leq H(X_{\Delta}|Y\mathbf{W}) + \gamma \biggr\},\notag
\end{eqnarray*}
where $\gamma>0$ is an arbitrary fixed small constant, and we denote the
pair $(W_{A}^{n},W_{B}^{n})$ as $\mathbf{W}^{n}$.  Then, for given
$y^{n}$, $\mathbf{w}^{n}$, and the received index $i$, declare
$\Bar{\psi}_{n}(i,y^{n},\mathbf{w}^{n})=x_{\Delta}^{n}$ if there is one
and only one pair $(x_{\Delta}^{n},y^{n},\mathbf{w}^{n})$ such that
$\Bar{\varphi_{n}}(x_{\Delta}^{n})=i$ and ${x}_{\Delta}^{n} \in
S_{n}(y^{n},\mathbf{w}^{n})$.  Otherwise, declare an error.

We will evaluate the decoding error probability averaged over randomly
chosen encoders as follows.  We have an error if $X_{\Delta}^{n}$ is not
in $S_{n}(Y^{n},\mathbf{W}^{n})$ or if there is another symbol
${\hat{x}}_{\Delta}^{n} \in \mathcal{X}_{\Delta}^{n}$ in the same bin.
Thus, we can define the events of error
\begin{eqnarray*}
E_{n}^{(0)} &:=& \{ X_{\Delta}^{n} \notin S_{n}(Y^{n},\mathbf{W}^{n})
\},\notag\\ 
E_{n}^{(1)} &:=& \bigl\{ \exists {\hat{x}}^{n} \neq
X_{\Delta}^{n}:
\Bar{\varphi_{n}}({\hat{x}}_{\Delta}^{n})=\Bar{\varphi_{n}}(X_{\Delta}^{n})
\; \\
&& ~~~~~~~ \mbox{and} \; {\hat{x}}_{\Delta}^{n} \in S_{n}(Y^{n},\mathbf{W}^{n})
\bigr\},\notag
\end{eqnarray*}
Then the decoding error probability averaged over randomly chosen
encoders $\Pr\{X_{\Delta}^{n} \neq
\Bar{\psi_{n}}(\Bar{\varphi_{n}}(X_{\Delta}^{n}),Y^{n},\mathbf{W}^{n})\}$
is upper bounded as
\begin{eqnarray}
\lefteqn{ \Pr\{X_{\Delta}^{n}  \neq 
                  \Bar{\psi_{n}}(\Bar{\varphi_{n}}(X_{\Delta}^{n}),Y^{n},\mathbf{W}^{n})\}}
                  \notag \\
                  &=&\Pr\{ E_{n}^{(0)} \cup E_{n}^{(1)} \}\notag \\
                  &\le& \Pr\{E_{n}^{(0)}\} + \Pr\{E_{n}^{(1)}\}
                  \label{eq:error}.
\end{eqnarray}
$\Pr\{E_{n}^{(0)}\}$ is evaluated as
\begin{eqnarray}
\Pr\{E_{n}^{(0)}\} &=& \Pr\{ X_{\Delta}^{n} \notin
S_{n}(Y^{n},\mathbf{W}^{n}) \}\notag\\ &=& \Pr\biggl\{\frac{1}{n}\log
\frac{1}{P_{X_{\Delta}^{n}|Y^{n}\mathbf{W}^{n}}(X_{\Delta}^{n}|Y^{n}\mathbf{W}^{n})}
\notag \\
&& ~~~~~~> H(X_{\Delta}|Y\mathbf{W}) + \gamma \biggr \}\notag\\ &=&
\Pr\biggl\{\frac{1}{n} \sum_{i=1}^{n} \log
\frac{1}{P_{X_{\Delta}|Y\mathbf{W}}
(X_{\Delta}^{(i)}|Y^{(i)}\mathbf{W}^{(i)})} \notag \\
&& ~~~~~~> H(X_{\Delta}|Y\mathbf{W}) +
\gamma \biggr \} \label{e0},
\end{eqnarray}
which tends to $0$ as $n \to \infty$ by the weak law of large numbers.
To bound $\Pr\{E_{n}^{(1)}\}$, we rewrite it as
\begin{eqnarray}
\lefteqn{ \Pr\{E_{n}^{(1)}\}} \notag\\ &=& \Pr\bigl\{ \exists
{\hat{x}}_{\Delta}^{n} \neq X_{\Delta}^{n}:
\Bar{\varphi}_{n}({\hat{x}}_{\Delta}^{n})=\Bar{\varphi}_{n}(X_{\Delta}^{n})\;
\notag \\
&& ~~~~~~~\mbox{and} \;{\hat{x}}_{\Delta}^{n} \in S_{n}(Y^{n},\mathbf{W}^{n}) \bigr\}
\notag\\ &=& \int_{\mathcal{Y}^{n}}p_{Y^{n}}(y^{n})
\sum_{(x_{\Delta}^{n},\mathbf{w}^{n})\in \mathcal{X}_{\Delta}^{n} \times
\mathcal{W}_{A}^{n} \times \mathcal{W}_{B}^{n}} \notag \\
&& P_{X_{\Delta}^{n}\mathbf{W}^{n}|y^{n}}(x_{\Delta}^{n},\mathbf{w}^{n})
g_{n}(x_{\Delta}^{n},y^{n},\mathbf{w}^{n}) \,dy^{n},
\label{e11}
\end{eqnarray}
where
\begin{eqnarray}
\lefteqn{g_{n}(x_{\Delta}^{n},y^{n},\mathbf{w}^{n})}\notag\\
&=&\Pr\bigl\{ \exists {\hat{x}}_{\Delta}^{n} \neq x_{\Delta}^{n}:
\Bar{\varphi}_{n}({\hat{x}}_{\Delta}^{n})=\Bar{\varphi}_{n}(x_{\Delta}^{n})\:
\notag \\
&& ~~~~\mbox{and} \; ({\hat{x}}_{\Delta}^{n}) \in S_{n}(y^{n},\mathbf{w}^{n})
\bigr\}. \label{define:g}
\end{eqnarray}
Furthermore, we can rewrite (\ref{define:g}) as
\begin{align}
g_{n}(x_{\Delta}^{n},y^{n},\mathbf{w}^{n})
&=\sum_{\genfrac{}{}{0pt}{2}{{\hat{x}}_{\Delta}^{n} \neq
x_{\Delta}^{n}}{{\hat{x}}_{\Delta}^{n} \in S_{n}(y^{n},\mathbf{w}^{n})}}
\Pr\{
\Bar{\varphi}_{n}({\hat{x}}_{\Delta}^{n})=\Bar{\varphi}_{n}(x_{\Delta}^{n})
\} \notag\\ &= \sum_{\genfrac{}{}{0pt}{2}{{\hat{x}}_{\Delta}^{n} \neq
x_{\Delta}^{n}}{ {\hat{x}}_{\Delta}^{n} \in
S_{n}(y^{n},\mathbf{w}^{n})}} \frac{1}{|\mathcal{M}_{A}|} \notag\\ &\leq
\sum_{{\hat{x}}_{\Delta}^{n} \in S_{n}(y^{n},\mathbf{w}^{n})}
\frac{1}{|\mathcal{M}_{A}|} \notag\\
&=\frac{|S_{n}(y^{n},\mathbf{w}^{n})|}{|\mathcal{M}_{A}|}\label{e13}
\end{align}
If ${\hat{x}}_{\Delta}^{n} \in S_{n}(y^{n},\mathbf{w}^{n})$, then from
the definition of $S_{n}(y^{n},\mathbf{w}^{n})$, we have
\begin{equation}
P_{X_{\Delta}^{n}|y^{n},\mathbf{w}^{n}}({\hat{x}}_{\Delta}^{n}) \geq
2^{-n(H(X_{\Delta}|Y\mathbf{W}) + \gamma)}.\notag
\end{equation}
Thus, we have
\begin{eqnarray}
1 &\geq& \sum_{{\hat{x}}_{\Delta}^{n} \in S_{n}(y^{n},\mathbf{w}^{n})}
       P_{X_{\Delta}^{n}|Y^{n}\mathbf{W}^{n}}(x_{\Delta}^{n}|y^{n},\mathbf{w}^{n})
       \notag \\
       &\geq& |S_{n}(y^{n},\mathbf{w}^{n})|2^{-n(H(X_{\Delta}|Y\mathbf{W})
       + \gamma)}.\notag
\end{eqnarray}
Hence, we have
\begin{equation}
|S_{n}(y^{n},\mathbf{w}^{n})|\leq 2^{n(H(X_{\Delta}|Y\mathbf{W}) +
 \gamma)}.\label{e14}
\end{equation}
From Eqs.(\ref{e11})--(\ref{e14}), we upper bound $\Pr\{E_{n}^{(1)}\}$
as
\begin{eqnarray}
\Pr \{E_{n}^{(1)}\} &\leq& \int_{\mathcal{Y}^{n}}p_{Y^{n}}(y^{n})
\sum_{(x_{\Delta}^{n},\mathbf{w}^{n})\in \mathcal{X}_{\Delta}^{n} \times
\mathcal{W}_{A}^{n} \times \mathcal{W}_{B}^{n}} \notag \\
&& P_{X_{\Delta}^{n}\mathbf{W}^{n}|y^{n}}(x_{\Delta}^{n},\mathbf{w}^{n}) 
\frac{2^{n(H(X_{\Delta}|Y\mathbf{W}) +
\gamma)}}{|\mathcal{M}_{A}|}\,dy^{n}\notag\\ &\leq&
\frac{2^{n(H(X_{\Delta}|Y\mathbf{W}) +
\gamma)}}{|\mathcal{M}_{A}|}\notag\\
&=&2^{-\log|\mathcal{M}_{A}|}2^{n(H(X_{\Delta}|Y\mathbf{W}) + \gamma)},
\end{eqnarray}
which exponentially tends to $0$ as $n \to \infty$ if
$\frac{1}{n}\log|\mathcal{M}_{A}|>H(X_{\Delta}|Y\mathbf{W})+\gamma$.

Since the decoding error probability $\Pr\{X_{\Delta}^{n} \neq
\Bar{\psi_{n}}(\Bar{\varphi_{n}}(X_{\Delta}^{n}),Y^{n},\mathbf{W}^{n})\}$
of randomly chosen code tends to $0$ as $n \to \infty$, there exist at
least one pair of an encoder and a decoder such that the decoding error
probability $\Pr\{{\hat{X}}_{\Delta}^{n} \neq X_{\Delta}^{n}\}$ tends to
$0$ as $n\to \infty$.

\section{Proof of lemma 2}

\label{proof_lemma2} \indent In this Appendix, we will show the proof of
lemma 2.  In section \ref{u-hash}, we introduce a two-universal hash
family, which is used for computation of a secret key.  In section
\ref{relate-var}, we define the security of the protocol in the sense of
the variational distance, and we show the relation between the security
of the protocol in the sense of the variational distance and the
condition Eq.~(\ref{eq:security}).  This relation implies that if the
security of the protocol in the sense of the variational distance is
satisfied, then the condition Eq.~(\ref{eq:security}) is satisfied.  In
section \ref{size-security}, we relate the size $|{\cal S}|$ of a secret
key $S$ and the size $|{\cal M}_{A}\times {\cal M}_{B}|$ of public
messages $\mathbf{M}=(M_{A},M_{B})$ to the security of the protocol, and
we show that if we set $\frac{1}{n}\ln|{\cal S}| <
H(X_{\Delta}Y_{\Delta}|ZW_{A}W_{B}) - \frac{1}{n} \ln|{\cal
M}_{A}\times{\cal M}_{B}|$, then there exists at least one hash function
$f$ that satisfy Eq.~(\ref{eq:security}) for sufficiently large $n$.

For the simplicity of notation, we denotes the integral over
$\mathbb{R}^{n}$ as $\int$ unless otherwise specified, and we
abbreviates
$P_{\mathbf{R}^{n}\mathbf{M}^{n}|Z^{n}\mathbf{W}^{n}}(\cdot,\cdot|z^{n},\mathbf{w}^{n})$
as $P_{\mathbf{R}^{n}\mathbf{M}^{n}|z^{n},\mathbf{w}^{n}}(\cdot,\cdot)$.
The variational distance $\| P_{1}-P_{2} \|$ between the probability
distribution $P_1$ and $P_2$ on $\mathcal{V}$ is defined as
\begin{eqnarray}
\| P_{1} - P_{2} \| := \sum_{v \in {\cal V}} | P_{1}(v) - P_{2}(v) |.
\end{eqnarray}

\subsection{two-universal hash family}
\label{u-hash}

In order to extract an almost secret string (secret key $S$) from a
partially secret strings (a pair $\mathbf{R}^{n}$ of random variables
$X_{\Delta}^{n}$ and $Y_{\Delta}^{n}$), we use a two-universal hash
family $\mathcal{F}$.  A set $\mathcal{F}$ of functions $f:
\mathcal{X}_{\Delta}^{n}\times \mathcal{Y}_{\Delta}^{n} \to \mathcal S$
is said to be a \textit{two-universal hash family} if we have
\begin{eqnarray}
P_F\left(\{ f \in \mathcal{F} \mid f(\mathbf{r}^{n}) =
	f({\mathbf{r}^\prime}^{n}) \}\right) \le \frac{1}{|{\cal S}|}~
	\label{eq-two-universal}
\end{eqnarray}
for any $\mathbf{r}^{n}\neq{\mathbf{r}^{\prime}}^{n} \in
\mathcal{X}_{\Delta}^{n}\times \mathcal{Y}_{\Delta}^{n}$, where $F$
denotes a random variable on ${\cal F}$ and $P_F$ denotes the uniform
distribution on ${\cal F}$.  For given Eve's received signals $z^{n} \in
\mathbb{R}^{n}$ and reliability information $\mathbf{w}^{n} \in {\cal
W}_{A}\times{\cal W}_{B}$, the jointly conditional distribution
$P_{S\mathbf{M}|z^{n},\mathbf{w}^{n}}(s,\mathbf{m})$ of a secret key
$S=f(\mathbf{R}^{n})$ and public message $\mathbf{M}$ is given by
\begin{eqnarray*}
P_{S\mathbf{M}|z^{n},\mathbf{w}^{n}}(s,\mathbf{m}) &:=&
\sum_{\mathbf{r}^{n} \in f^{-1}(s)}
P_{\mathbf{R}^{n}\mathbf{M}|z^{n},\mathbf{w}^{n}}(\mathbf{r}^{n},\mathbf{m})\notag\\
&=&
P_{\mathbf{R}^{n}\mathbf{M}|z^{n},\mathbf{w}^{n}}(f^{-1}(s),\mathbf{m}),
\end{eqnarray*}
where $f^{-1}(s) := \{ \mathbf{r}^{n} \in
\mathcal{X}_{\Delta}^{n}\times\mathcal{Y}_{\Delta}^{n} \mid
f(\mathbf{r}^{n}) = s \}$ is the subset of a set
$\mathcal{X}_{\Delta}^{n} \times \mathcal{Y}_{\Delta}^{n}$ such that
$f(\mathbf{r}^{n})=s$.  Note that since $S$ depends on a hash function
$f$, it should be referred as $S_f$.  But, we use the above notation for
convenience in this paper.

\subsection{The security of the protocol in the sense of the variational
  distance}

\label{relate-var}

In order to prove lemma 2, we define the security of the protocol in the
sense of the variational distance in this section.  If a secret key $S$
is independent of Eve's information and its distribution $P_{S}$ is
close to the uniform distribution $P_{\bar{S}}$ on ${\cal S}$, we decide
that the secret key $S$ is secure in the sense of the variational
distance.  In the other words, we define the security of the protocol as
\begin{eqnarray}
\label{definition-of-security} 
\Delta_f &:=& \int p_{Z^{n}}(z^{n})
\sum_{\mathbf{w}^{n}\in{\cal W}_{A}^{n}\times{\cal W}_{B}^{n}}
P_{\mathbf{W}^{n}|z^{n}}(\mathbf{w}^{n}) \notag \\
&& \| P_{S\mathbf{M}|z^{n},\mathbf{w}^{n}} -P_{\bar{S}} \times
P_{\mathbf{M}|z^{n},\mathbf{w}^{n}} \| dz^{n},
\end{eqnarray}
where $P_{\mathbf{M}|z^{n},\mathbf{w}^{n}}$ is the marginal distribution
of $P_{S\mathbf{M}|z^{n},\mathbf{w}^{n}}$, and $P_{\bar{S}} \times
P_{\mathbf{M}|z^{n},\mathbf{w}^{n}}$ is the product distribution of
$P_{\bar{S}}$ and $P_{\mathbf{M}|z^{n},\mathbf{w}^{n}}$

As an extension of \cite[Lemma 1]{csiszar:04} to continuous random
variable, the following lemma relates the security of the protocol in
the sense of the variational distance to the security of the protocol in
the sense of the entropy shown in Eq.~(\ref{eq:security}).

\setcounter{theorem}{2}
\begin{lemma}

The conditional entropy $H(S|Z^{n}\mathbf{W}^{n}\mathbf{M}F)$ is lower
bounded by
\begin{eqnarray}
H(S|Z^{n}\mathbf{W}^{n}\mathbf{M}F) &\ge& (1-\mathbb{E}_f[\Delta_{f}]) \ln
|{\cal S}| \notag \\
&& - \mathbb{E}_f[\Delta_f] \log
\frac{1}{\mathbb{E}_f[\Delta_f]}.
\end{eqnarray}
\end{lemma}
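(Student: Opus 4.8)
The plan is to reduce the statement to a pointwise ``continuity of entropy'' estimate and then to average it over Eve's data. First I would rewrite the security quantity $\mathbb{E}_f[\Delta_f]$ as a single expected variational distance. Writing $V=(Z^n,\mathbf{W}^n,\mathbf{M},F)$ for Eve's total knowledge together with the hash choice, the identity
\[
\|P_{S\mathbf{M}|z^{n},\mathbf{w}^{n}}-P_{\bar S}\times P_{\mathbf{M}|z^{n},\mathbf{w}^{n}}\|
=\sum_{\mathbf{m}}P_{\mathbf{M}|z^{n},\mathbf{w}^{n}}(\mathbf{m})\,\|P_{S|\mathbf{m},z^{n},\mathbf{w}^{n}}-P_{\bar S}\|
\]
lets me peel off $\mathbf{M}$, and summing over $f$ with weight $P_F(f)$ absorbs $F$ as well, so that $\mathbb{E}_f[\Delta_f]=\mathbb{E}_V[\delta(V)]$, where $\delta(v):=\|P_{S|v}-P_{\bar S}\|$ and $P_{\bar S}$ is the uniform distribution on $\mathcal{S}$. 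Since the conditional entropy is the expected entropy of the conditional law, $H(S|Z^{n}\mathbf{W}^{n}\mathbf{M}F)=\mathbb{E}_V[H(P_{S|V})]$, and the whole lemma reduces to bounding $H(P_{S|v})$ in terms of $\delta(v)$ and then averaging.

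The technical heart is the pointwise estimate: for any distribution $P$ on $\mathcal{S}$ with $\|P-P_{\bar S}\|=\delta\le\tfrac12$,
\[
H(P)\;\ge\;(1-\delta)\log|\mathcal{S}|-\delta\log\tfrac{1}{\delta}.
\]
This is exactly the classical continuity-of-entropy estimate of Csisz\'{a}r and K\"{o}rner. It follows from the identity $\log|\mathcal{S}|-H(P)=D(P\,\|\,P_{\bar S})$ together with $|H(P)-H(P_{\bar S})|\le\delta\log\frac{|\mathcal{S}|}{\delta}$, which holds precisely because the reference $P_{\bar S}$ is uniform, so the likelihood ratio is bounded by $|\mathcal{S}|$ and the divergence cannot blow up. The only genuinely new point is that $V$ contains the continuous variable $Z^{n}$; I would therefore apply the estimate for each fixed $(z^{n},\mathbf{w}^{n},\mathbf{m},f)$, where $P_{S|v}$ is an honest discrete law on $\mathcal{S}$, and then integrate over $z^{n}$ and sum over the discrete coordinates, the integral over $\mathbb{R}^{n}$ simply replacing the corresponding sum. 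This is the promised ``modification for continuous random variables.''

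Finally I would assemble the averaged bound. Setting $g(\delta):=(1-\delta)\log|\mathcal{S}|-\delta\log\frac1\delta=(1-\delta)\log|\mathcal{S}|+\delta\log\delta$, a direct computation gives $g''(\delta)=\frac{1}{\delta\ln 2}>0$, so $g$ is convex. Combining the pointwise bound with $H(S|Z^{n}\mathbf{W}^{n}\mathbf{M}F)=\mathbb{E}_V[H(P_{S|V})]$ and then Jensen's inequality gives
\[
H(S|Z^{n}\mathbf{W}^{n}\mathbf{M}F)\ge\mathbb{E}_V[g(\delta(V))]\ge g(\mathbb{E}_V[\delta(V)])=g(\mathbb{E}_f[\Delta_f]),
\]
which is the claimed inequality. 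The main obstacle I anticipate is not any single display but the bookkeeping around the regime $\delta\le\tfrac12$: the pointwise estimate is only guaranteed there, so I would isolate the set of $v$ with $\delta(v)>\tfrac12$ and control its contribution separately, noting that by Markov's inequality its total probability is at most $2\,\mathbb{E}_V[\delta(V)]$, hence negligible in the regime $\mathbb{E}_f[\Delta_f]\to0$ in which the lemma is applied (and in which \eqref{eq:security2} is needed); on the complementary event the convexity and Jensen argument applies verbatim.
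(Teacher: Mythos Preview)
Your approach is correct and matches the paper's: it too introduces $\Delta_{f,\mathbf{m},z^{n},\mathbf{w}^{n}}=\|P_{S|\mathbf{m},z^{n},\mathbf{w}^{n}}-P_{\bar S}\|$, applies the pointwise continuity-of-entropy bound $H(S|\cdot)\ge\log|\mathcal S|-\Delta\log(|\mathcal S|/\Delta)$, and then uses Jensen's inequality---iterated over $(\mathbf{m},\mathbf{w}^{n})$, then $z^{n}$, then $f$ via the concavity of $t\mapsto t\log(|\mathcal S|/t)$ rather than your single application to the convex $g$, which is equivalent. The paper does not address the $\delta\le\tfrac12$ caveat you raise.
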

Note that since $\mathbf{W}^{n}=(W_{A}^{n},W_{B}^{n})$ and
$\mathbf{M}=(M_{A},M_{B})$, the conditional entropy
$H(S|Z^{n}\mathrm{W}^{n}\mathbf{M}F)$ equivalent to
$H(S|Z^{n}W_{A}^{n}W_{B}^{n}M_{A},M_{B}F)$ in Eq.~(\ref{eq:security2}).
From this lemma, if $\mathbb{E}_f[\Delta_f]$ is sufficiently small, a
secret key $S$ is secure in the sense of the entropy.

\begin{proof}
Let
\begin{eqnarray}
\Delta_{f,\mathbf{m},z^{n},\mathbf{w}^{n}} := \|
P_{S|\mathbf{m},z^{n},\mathbf{w}^{n}} - P_{\bar{S}} \|.
\end{eqnarray}
Then, we can rewrite $\Delta_{f}$ as
\begin{eqnarray}
\Delta_f &=& \int
p_{Z^{n}}(z^{n}) \sum_{\mathbf{m},\mathbf{w}^{n}} \nonumber \\
&& P_{\mathbf{M}\mathbf{W}^{n}|z^{n}}(\mathbf{m},\mathbf{w}^{n})
\Delta_{f,\mathbf{m},z^{n},\mathbf{w}^{n}}\, dz^{n}
	\label{eq-lemma-security-equivalent-1} 
\end{eqnarray}

For given $z^{n} \in \mathbb{R}^{n}$, $\mathbf{w}^{n} \in {\cal
W}_{A}^{n}\times{\cal W}_{B}^{n}$, and $\mathbf{m} \in {\cal
M}_{A}\times {\cal M}_{B}$, we obtain
\begin{eqnarray}
H(S|\mathbf{M} &=& \mathbf{m},
 Z^{n}=z^{n},\mathbf{W}^{n}=\mathbf{w}^{n},F=f) \notag \\
&\ge& \log |{\cal S}| -
 \Delta_{f,\mathbf{m},z^{n},\mathbf{w}^{n}} \log \frac{|{\cal
 S}|}{\Delta_{f,\mathbf{m},z^{n},\mathbf{w}^{n}}}, \notag \\
 \label{eq-lemma-security-equivalent-3}
\end{eqnarray}
which follows from the continuity of entropy \cite{cover} in the similar
 way as \cite[Lemma 1]{csiszar:04}.

The second term of Eq.~(\ref{eq-lemma-security-equivalent-3}) is upper
bonded as follow.  Since $t \log \frac{1}{t}$ is a concave function, we
obtain
\begin{eqnarray}
\label{eq-lemma-security-equivalent-4} \sum_{\mathbf{m},\mathbf{w}^{n}}
\lefteqn{  P_{\mathbf{M}\mathbf{W}^{n}|z^{n}}(\mathbf{m},\mathbf{w}^{n})
 \Delta_{f,\mathbf{m},z^{n},\mathbf{w}^{n}} \log \frac{|{\cal
 S}|}{\Delta_{f,\mathbf{m},z^{n},\mathbf{w}^{n}}} } \notag \\
&\le& \Delta_{f,z} \log
 \frac{|{\cal S}|}{\Delta_{f,z}}
\end{eqnarray}
from Jensen's inequality for $\mathbf{w}^{n},\mathbf{m}$, where we let
$\Delta_{f,z^{n}} := \sum_{\mathbf{m},\mathbf{w}^{n}}
P_{\mathbf{M}\mathbf{W}^{n}|z^{n}}(\mathbf{m},\mathbf{w}^{n})
\Delta_{f,\mathbf{m},z^{n},\mathbf{w}^{n}}.$ Averaging
Eq.~(\ref{eq-lemma-security-equivalent-4}) over $z^{n}$, we obtain
\begin{eqnarray}
\label{eq-lemma-security-equivalent-5} \int p_{Z^{n}}(z^{n})
\Delta_{f,z^{n}} \log \frac{|{\cal S}|}{\Delta_{f,z}}dz^{n}
 \le
\Delta_f \log \frac{|{\cal S}|}{\Delta_f}
\end{eqnarray}
from Jensen's inequality for $z^{n}$.  Moreover, averaging
Eq.~(\ref{eq-lemma-security-equivalent-5}) over $f$, we obtain
\begin{eqnarray}
\mathbb{E}_f\left[\Delta_f \log \frac{|{\cal S}|}{\Delta_f}\right] \le
\mathbb{E}_f[\Delta_f] \log \frac{|{\cal S}|}{\mathbb{E}_f[\Delta_f]}
\end{eqnarray}
from Jensen's inequality for $f$.
\end{proof}

Note that when we use Jensen's inequality for a continuous random
variable, the condition of absolutely integrable
\begin{eqnarray}
\int p_{Z^{n}}(z^{n}) | \Delta_{f,z^{n}} | dz^{n} < \infty
\end{eqnarray}
must be satisfied \cite{itou}.  In this case, from the fact that $0 \le
\Delta_{f,z^{n}} \le 2$, this condition is satisfied.

\subsection{The relation between the size of a secret key and the
  security of the protocol}
\label{size-security}
The following lemma relates the size $|{\cal S}|$ of a secret key $S$
and the size $|{\cal M}_{A}\times {\cal M}_{B}|$ of public messages
$\mathbf{M}$ to the security of the protocol.
\begin{lemma}
 \label{lemma-privacy-amplification} For the size $|{\cal S}|$ of a
secret key $S$, the size $|{\cal M}_{A}\times {\cal M}_{B}|$ of public
messages $\mathbf{M}$, and the security of the protocol $\Delta_{f}$, we
have
\begin{eqnarray}
 \label{eq-lemma-pa-0} \lefteqn{\mathbb{E}_f [ \Delta_f ]}\notag\\ &\le&
\sqrt{\frac{|{\cal S}| |{\cal M}_{A}\times{\cal M}_{B}|}{2^{\alpha n}}}
\notag\\ &&+ 2 \int p_Z^{n}(z^{n}) \sum_{\mathbf{w}^{n}}
P_{\mathbf{W}^{n}|z^{n}}(\mathbf{w}^{n})\notag\\ &&\times
P_{\mathbf{R}^{n}|z^{n}\mathbf{w}^{n}}\left( \left\{ \mathbf{r}^{n} \in
{\cal X}_{\Delta}^{n}\times{\cal Y}_{\Delta}^{n} \mid \right. \right. \notag \\
&& ~~~~ \left. \left. -\frac{1}{n}\log
P_{\mathbf{R}^{n}|z^{n}\mathbf{w}^{n}}(\mathbf{r}^{n}) < \alpha \right\}
\right) dz^{n},
\end{eqnarray}
where $\mathbb{E}_f$ denotes expectation for a uniform distribution on
${\cal F}$.
\end{lemma}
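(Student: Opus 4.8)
The plan is to bound $\mathbb{E}_f[\Delta_f]$ by controlling, for each fixed $(z^n,\mathbf{w}^n)$, the $f$-averaged inner variational distance that appears in the definition of $\Delta_f$, and only then integrating over $z^n$ and summing over $\mathbf{w}^n$ against the weight $p_{Z^n}(z^n)P_{\mathbf{W}^n|z^n}(\mathbf{w}^n)$ (the interchange of $\mathbb{E}_f$ with the integral being legitimate by nonnegativity). First I would fix $(z^n,\mathbf{w}^n)$ and introduce the typical set
$$\mathcal{T}=\left\{\mathbf{r}^n\in\mathcal{X}_\Delta^n\times\mathcal{Y}_\Delta^n: -\tfrac{1}{n}\log P_{\mathbf{R}^n|z^n,\mathbf{w}^n}(\mathbf{r}^n)\ge\alpha\right\},$$
so that $P_{\mathbf{R}^n|z^n,\mathbf{w}^n}(\mathbf{r}^n)\le 2^{-\alpha n}$ on $\mathcal{T}$, while its complement $\mathcal{T}^c$ carries exactly the atypical mass that figures in the second term of the claimed bound. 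I would then define the truncated sub-distribution $\tilde{P}$ on $(\mathbf{R}^n,\mathbf{M})$ by $\tilde{P}(\mathbf{r}^n,\mathbf{m})=P_{\mathbf{R}^n\mathbf{M}|z^n,\mathbf{w}^n}(\mathbf{r}^n,\mathbf{m})$ for $\mathbf{r}^n\in\mathcal{T}$ and $\tilde{P}(\mathbf{r}^n,\mathbf{m})=0$ otherwise, together with its hashed and marginal versions $\tilde{P}_{S\mathbf{M}}$ and $\tilde{P}_{\mathbf{M}}$.

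The second step is a triangle-inequality split (all distributions conditioned on $(z^n,\mathbf{w}^n)$)
$$\|P_{S\mathbf{M}}-P_{\bar S}\times P_{\mathbf{M}}\|\le \|P_{S\mathbf{M}}-\tilde{P}_{S\mathbf{M}}\|+\|\tilde{P}_{S\mathbf{M}}-P_{\bar S}\times\tilde{P}_{\mathbf{M}}\|+\|P_{\bar S}\times\tilde{P}_{\mathbf{M}}-P_{\bar S}\times P_{\mathbf{M}}\|.$$
Since truncation only removes the atypical mass, the first term equals $P_{\mathbf{R}^n|z^n,\mathbf{w}^n}(\mathcal{T}^c)$ (the removed mass is nonnegative and sums to it), and the third equals the same quantity using $\sum_s P_{\bar S}(s)=1$; together they produce precisely the factor $2$ multiplying the atypical probability in the statement.

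The third and main step bounds the middle term. I would apply Cauchy--Schwarz over the index set $\mathcal{S}\times(\mathcal{M}_A\times\mathcal{M}_B)$, whose cardinality is $|\mathcal{S}|\,|\mathcal{M}_A\times\mathcal{M}_B|$, to obtain
$$\|\tilde{P}_{S\mathbf{M}}-P_{\bar S}\times\tilde{P}_{\mathbf{M}}\|\le\sqrt{|\mathcal{S}|\,|\mathcal{M}_A\times\mathcal{M}_B|}\,\sqrt{\sum_{s,\mathbf{m}}\Big(\tilde{P}_{S\mathbf{M}}(s,\mathbf{m})-\tfrac{1}{|\mathcal{S}|}\tilde{P}_{\mathbf{M}}(\mathbf{m})\Big)^2},$$
take $\mathbb{E}_f$, and push it inside the root by Jensen's inequality $\mathbb{E}_f[\sqrt{\cdot}]\le\sqrt{\mathbb{E}_f[\cdot]}$. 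Expanding the square and using that $\sum_s\tilde{P}_{S\mathbf{M}}(s,\mathbf{m})=\tilde{P}_{\mathbf{M}}(\mathbf{m})$ holds for every $f$, the cross and last terms cancel the diagonal contribution of $\mathbb{E}_f[\sum_s\tilde{P}_{S\mathbf{M}}(s,\mathbf{m})^2]$; the two-universal property (\ref{eq-two-universal}) bounds each off-diagonal collision probability by $1/|\mathcal{S}|$, and what survives is
$$\mathbb{E}_f\Big[\sum_{s,\mathbf{m}}\big(\tilde{P}_{S\mathbf{M}}(s,\mathbf{m})-\tfrac{1}{|\mathcal{S}|}\tilde{P}_{\mathbf{M}}(\mathbf{m})\big)^2\Big]\le\sum_{\mathbf{r}^n\in\mathcal{T}}\sum_{\mathbf{m}}P_{\mathbf{R}^n\mathbf{M}|z^n,\mathbf{w}^n}(\mathbf{r}^n,\mathbf{m})^2.$$
Using $\sum_{\mathbf{m}}a_{\mathbf{m}}^2\le(\sum_{\mathbf{m}}a_{\mathbf{m}})^2$ for nonnegative $a_{\mathbf{m}}$, the right side is at most $\sum_{\mathbf{r}^n\in\mathcal{T}}P_{\mathbf{R}^n|z^n,\mathbf{w}^n}(\mathbf{r}^n)^2\le 2^{-\alpha n}$, where the last step uses $P\le 2^{-\alpha n}$ on $\mathcal{T}$. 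This delivers the first term $\sqrt{|\mathcal{S}|\,|\mathcal{M}_A\times\mathcal{M}_B|/2^{\alpha n}}$, which is a constant independent of $(z^n,\mathbf{w}^n)$ and so survives the final integration unchanged.

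Reassembling the three bounds, integrating over $z^n$ and summing over $\mathbf{w}^n$ yields exactly (\ref{eq-lemma-pa-0}). The main obstacle I anticipate is the collision-probability computation of the middle term: one must organize the expansion so the two-universal estimate is applied only to off-diagonal pairs, verify that the $P_{\bar S}$ cross-terms cancel \emph{exactly}, and keep the truncation to $\mathcal{T}$ consistent throughout so that the clean $2^{-\alpha n}$ factor emerges rather than a conditional collision entropy. A secondary technical point, already flagged in the integrability remark for the preceding variational-distance lemma, is justifying the Jensen steps and the interchange of $\mathbb{E}_f$ with the integral over the continuous variable $z^n$ via absolute integrability, which here holds because the inner variational distance is bounded.
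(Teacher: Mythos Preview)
Your proposal is correct and follows essentially the same route as the paper's proof: the paper also fixes $(z^n,\mathbf{w}^n)$, splits off the atypical set $A_n$ (your $\mathcal{T}$) via the same triangle-inequality maneuver to isolate $h_n(s,\mathbf{m})=|\tilde P_{S\mathbf{M}}(s,\mathbf{m})-\tfrac{1}{|\mathcal{S}|}\tilde P_{\mathbf{M}}(\mathbf{m})|$ plus $2P(A_n^c)$, and then applies Cauchy--Schwarz, Jensen, and the two-universal collision bound exactly as you describe. The only cosmetic difference is in the final $2^{-\alpha n}$ estimate: the paper bounds one factor directly by $P_{\mathbf{R}^n\mathbf{M}|z^n\mathbf{w}^n}(\mathbf{r}^n,\mathbf{m})\le P_{\mathbf{R}^n|z^n\mathbf{w}^n}(\mathbf{r}^n)\le 2^{-\alpha n}$ rather than first summing over $\mathbf{m}$ via your inequality $\sum_{\mathbf{m}}a_{\mathbf{m}}^2\le(\sum_{\mathbf{m}}a_{\mathbf{m}})^2$.
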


\begin{proof}
This proof is based on the techniques in \cite[Chapter 5]{renner:05b}.
In the following, we will prove
\begin{eqnarray}
\label{eq-lemma-pa-1} \lefteqn{\mathbb{E}_f [
\Delta_{f,z^{n},\mathbf{w}^{n}} ]}\notag\\ &\le& \sqrt{\frac{|{\cal S}|
|{\cal M}_{A}\times{\cal M}_{B}|}{2^{\alpha n}}}\notag \\ &&+ 2
P_{\mathbf{R}^{n}|z^{n}\mathbf{w}^{n}}\left( \left\{ \mathbf{r}^{n} \in
{\cal X}_{\Delta}^{n}\times{\cal Y}_{\Delta}^{n} \mid
                                             \right. \right. \notag \\
&& ~~~~~~ \left. \left. -\frac{1}{n}\log
P_{\mathbf{R}^{n}|z^{n}\mathbf{w}^{n}}(\mathbf{r}^{n}) < \alpha \right\}
\right),
\end{eqnarray}
where
\begin{eqnarray}
\label{eq-lemma-pa-1-1} \Delta_{f,z^{n},\mathbf{w}^{n}} = \|
P_{S\mathbf{M}|z^{n}\mathbf{w}^{n}} - P_{\bar{S}} \times
P_{\mathbf{M}|z^{n}\mathbf{w}^{n}} \|,
\end{eqnarray}
Averaging Eq.~(\ref{eq-lemma-pa-1}) over $z^{n}$ and $\mathbf{w}^{n}$,
we obtain Eq.~(\ref{eq-lemma-pa-0}).

For given $z^{n} \in \mathbb{R}^{n}$ and $\mathbf{w}^{n} \in {\cal
W}_{A}^{n}\times {\cal W}_{B}^{n}$, we define the set $A_{n} \subset
{\cal X}_{\Delta}^{n}\times {\cal Y}_{\Delta}^{n}$ as
\begin{eqnarray*}
\!\!\! A_{n} := \left\{ \mathbf{r}^{n} \in {\cal X}_{\Delta}^{n}\times {\cal
Y}_{\Delta}^{n} \mid  -\frac{1}{n}\log
P_{\mathbf{R}^{n}|z^{n}\mathbf{w}^{n}}(\mathbf{r}^{n}) \ge \alpha
\right\},
\end{eqnarray*}
and we define the set $A_{n}^c$ as the complement of $A_{n}$ on ${\cal
 X}_{\Delta}^{n}\times {\cal Y}_{\Delta}^{n}$.  Then,
 $\Delta_{f,z^{n},\mathbf{w}^{n}}$ for given $f\in {\cal F}$ is upper
 bounded by
\begin{eqnarray}
\lefteqn{ \| P_{S\mathbf{M}|z^{n}\mathbf{w}^{n}} - P_{\bar{S}} \times
P_{\mathbf{M}|z^{n}\mathbf{w}^{n}} \| } \notag \\ &=&
\sum_{s,\mathbf{m}} |
P_{\mathbf{R}^{n}\mathbf{M}|z^{n}\mathbf{w}^{n}}(f^{-1}(s), \mathbf{m})
\notag \\
&& - P_{\bar{S}}(s) P_{\mathbf{M}|z^{n}\mathbf{w}^{n}}(\mathbf{m}) |
\label{eq-lemma-pa-2} \\ 
&=& \sum_{s,\mathbf{m}} |
P_{\mathbf{R}^{n}\mathbf{M}|z^{n}\mathbf{w}^{n}}(f^{-1}(s)\cap
A_{n},\mathbf{m}) \notag \\
&& - P_{\bar{S}}(s)
P_{\mathbf{M}|z^{n}\mathbf{w}^{n}}(A_{n},\mathbf{m}) \notag\\
&& +
P_{\mathbf{R}^{n}\mathbf{M}|z^{n}\mathbf{w}^{n}}(f^{-1}(s)\cap
A_{n}^{c},\mathbf{m}) \notag \\
&& - P_{\bar{S}}(s)
P_{\mathbf{M}|z^{n}\mathbf{w}^{n}}(A_{n}^{c},\mathbf{m}) |
\label{eq-lemma-pa-3} \\ &\le& \sum_{s,\mathbf{m}} h_{n}(s,\mathbf{m})
+\sum_{s,\mathbf{m}}P_{\mathbf{R}^{n}\mathbf{M}|z^{n}\mathbf{w}^{n}}(f^{-1}(s)\cap
A_{n}^{c},\mathbf{m})\notag \\ &&+\sum_{s,\mathbf{m}} P_{\bar{S}}(s)
P_{\mathbf{R}^{n}\mathbf{M}|z^{n}\mathbf{w}^{n}}(A_{n}^{c},\mathbf{m})
\label{eq-lemma-pa-4} \\ &=& \sum_{s,\mathbf{m}} h_{n}(s,\mathbf{m}) +2
P_{\mathbf{R}^{n}|z^{n}\mathbf{w}^{n}}(A_{n}^{c}).
\label{eq-lemma-pa-5}
\end{eqnarray}
where
\begin{eqnarray}
h_{n}(s,\mathbf{m}) &=& |P_{\mathbf{R}^{n}\mathbf{M}|z^{n}\mathbf{w}^{n}}(f^{-1}(s)\cap
A_{n},\mathbf{m}) \notag \\
&& - P_{\bar{S}}(s)
P_{\mathbf{R}^{n}\mathbf{M}|z^{n}\mathbf{w}^{n}}(A_{n},\mathbf{m})|.
\end{eqnarray}
Eq.~(\ref{eq-lemma-pa-2}) follows from the definition of the variational
distance and $f^{-1}(s)$.  Eq.~(\ref{eq-lemma-pa-3}) follows from the
fact that $(f^{-1}(s) \cap A_{n}) \cap (f^{-1}(s) \cap A_{n}^c) =
\emptyset$, $~f^{-1}(s) = (f^{-1}(s) \cap A_{n}) \cup (f^{-1}(s) \cap
A_{n}^c)$, and $P_{\mathbf{M}|z^{n}\mathbf{w}^{n}}(\mathbf{m}) =
P_{\mathbf{R}^{n}\mathbf{M}|z^{n}\mathbf{w}^{n}}(A_{n},\mathbf{m}) +
P_{\mathbf{R}^{n}\mathbf{M}|z^{n}\mathbf{w}^{n}}(A_{n}^c,\mathbf{m})$.
Eq.~(\ref{eq-lemma-pa-4}) follows from the triangle inequality.
Eq.~(\ref{eq-lemma-pa-5}) follows from the fact that $\cup_{s \in {\cal
S}} f^{-1}(s) = {\cal X}^{n}_{\Delta}\times{\cal Y}^{n}_{\Delta}$.
 By
regarding the first term in Eq.~(\ref{eq-lemma-pa-5}) as an inner
product, and by using the Cauchy-Schwarz inequality, we can upper bound
the first term in Eq.~(\ref{eq-lemma-pa-5}) by
\begin{eqnarray}
\lefteqn{  \sum_{s,\mathbf{m}} h_{n}(s,\mathbf{m}) } \notag \\
&\le&\sqrt{|{\cal S}| |{\cal
M}_{A}\times{\cal M}_{B}| \sum_{s,\mathbf{m}}h_{n}(s,\mathbf{m})^{2}}
\label{eq-lemma-pa-6}
\end{eqnarray}

Furthermore, we can rewrite the inside of the root of
 Eq.~(\ref{eq-lemma-pa-6}) as
\begin{eqnarray}
\lefteqn{\sum_{s,\mathbf{m}} h_{n}(s,\mathbf{m})^2} \notag\\ &=&
\sum_{s,\mathbf{m}} \left\{
P_{\mathbf{R}^{n}\mathbf{M}|z^{n}\mathbf{w}^{n}}(f^{-1}(s)\cap
A_{n},\mathbf{m})^2 \right. \notag \\ 
&& \mspace{30mu}- 2
P_{\mathbf{R}^{n}\mathbf{M}|z^{n}\mathbf{w}^{n}}(f^{-1}(s)\cap
A_{n},\mathbf{m}) \notag \\
&& ~~~~~~~~~P_{\bar{S}}(s)
P_{\mathbf{R}^{n}\mathbf{M}|z^{n}\mathbf{w}^{n}}(A_{n},\mathbf{m})
\notag\\ && \left.  \mspace{30mu}+ P_{\bar{S}}(s)^2
P_{\mathbf{R}^{n}\mathbf{M}|z^{n}\mathbf{w}^{n}}(A_{n},\mathbf{m})^{2}
\right\} \notag\\ &=& \sum_{s,\mathbf{m}}
P_{\mathbf{R}^{n}\mathbf{M}|z^{n}\mathbf{w}^{n}}(f^{-1}(s)\cap
A_{n},\mathbf{m})^2 \notag \\
&& - \sum_{\mathbf{m}} \frac{1}{|{\cal S}|}
P_{\mathbf{R}^{n}\mathbf{M}|z^{n}\mathbf{w}^{n}}(A_{n},\mathbf{m})^2,\label{eq-lemma-pa-7}
\end{eqnarray}
where Eq.~(\ref{eq-lemma-pa-7}) follows from the fact that
$P_{\bar{S}}(s) = \frac{1}{|{\cal S}|}$ and $\sum_{s}
P_{\mathbf{R}^{n}\mathbf{M}|z^{n}\mathbf{w}^{n}}(f^{-1}(s)\cap
A_{n},\mathbf{m})
 =
P_{\mathbf{R}^{n}\mathbf{M}|z^{n}\mathbf{w}^{n}}(A_{n},\mathbf{m})$.
Then, we can rewrite the first term of Eq.~(\ref{eq-lemma-pa-7}) as
\begin{eqnarray}
\lefteqn{ \sum_{s,\mathbf{m}}
P_{\mathbf{R}^{n}\mathbf{M}|z^{n}\mathbf{w}^{n}}(f^{-1}(s)\cap
A_{n},\mathbf{m})^2}\notag \\ 
&=& \sum_{s,\mathbf{m}}
\sum_{\mathbf{r}^{n}, {\mathbf{r}^\prime}^{n} \in f^{-1}(s) \cap A_{n} }
P_{\mathbf{R}^{n}\mathbf{M}|z^{n}\mathbf{w}^{n}}(\mathbf{r}^{n},\mathbf{m})
\notag \\
&& ~~~~~~~~~~~~~~~~~~~~~~~~~~~P_{\mathbf{R}^{n}\mathbf{M}|z^{n}\mathbf{w}^{n}}({\mathbf{r}^\prime}^{n},\mathbf{m})
\notag \\ &=& \sum_{\mathbf{m}} \sum_{\mathbf{r}^{n},
{\mathbf{r}^\prime}^{n} \in A_{n}}
\delta_{f(\mathbf{r}^{n}),f({\mathbf{r}^\prime}^{n})}
P_{\mathbf{R}^{n}\mathbf{M}|z^{n}\mathbf{w}^{n}}(\mathbf{r}^{n},\mathbf{m})
\notag \\
&&~~~~~~~~~~~~~~P_{\mathbf{R}^{n}\mathbf{M}|z^{n}\mathbf{w}^{n}}({\mathbf{r}^\prime}^{n},\mathbf{m}),
\end{eqnarray}
where $\delta_{f(\mathbf{r}^{n}),f({\mathbf{r}^\prime}^{n})}$ is
 Kronecker's delta.  On the other hand, we can rewrite the second term
 of Eq.~(\ref{eq-lemma-pa-7}) as
\begin{eqnarray}
\lefteqn{ \sum_{\mathbf{m}} \frac{1}{|{\cal S}|}
P_{\mathbf{R}^{n}\mathbf{M}|z^{n}\mathbf{w}^{n}}(A_{n},\mathbf{m})^2}
\notag\\ &=& \sum_{\mathbf{m}} \sum_{\mathbf{r}^{n},
{\mathbf{r}^\prime}^{n} \in A_{n}} \frac{1}{|{\cal S}|}
P_{\mathbf{R}^{n}\mathbf{M}|z^{n}\mathbf{w}^{n}}(\mathbf{r}^{n},\mathbf{m})
\notag \\
&& ~~~~~P_{\mathbf{R}^{n}\mathbf{M}|z^{n}\mathbf{w}^{n}}({\mathbf{r}^\prime}^{n},\mathbf{m}).
\end{eqnarray}
Thus, averaging Eq.~(\ref{eq-lemma-pa-7}) over $f$, we obtain
\begin{eqnarray}
&& \sum_{\mathbf{m}} \sum_{\mathbf{r}^{n}, {\mathbf{r}^\prime}^{n} \in
 A_{n}} \mathbb{E}_f \left[
 \delta_{f(\mathbf{r}^{n}),f({\mathbf{r}^\prime}^{n})} - \frac{1}{|{\cal
 S}|} \right] \notag \\
&&  P_{\mathbf{R}^{n}\mathbf{M}|z^{n}\mathbf{w}^{n}}(\mathbf{r}^{n},\mathbf{m})
 P_{\mathbf{R}^{n}\mathbf{M}|z^{n}\mathbf{w}^{n}}({\mathbf{r}^\prime}^{n},\mathbf{m}). \label{eq-lemma-pa-8}
\end{eqnarray}
Since $f$ is chosen from a universal-hash-family, we obtain
\begin{eqnarray*}
\mathbb{E}_f \left[
	\delta_{f(\mathbf{r}^{n}),f({\mathbf{r}^\prime}^{n})} -
	\frac{1}{|{\cal S}|} \right] \le \left\{
	\begin{array}{ll} 
	1 & \mbox{for } \mathbf{r}^{n} = {\mathbf{r}^\prime}^{n} \\ 0 &
	\mbox{for } \mathbf{r}^{n} \neq {\mathbf{r}^\prime}^{n}
	\end{array} \right. 
\end{eqnarray*}
from its definition (shown in Eq.~(\ref{eq-two-universal})).  Thus,
 Eq.~(\ref{eq-lemma-pa-8}) is upper bounded by
\begin{eqnarray}
\lefteqn{ \sum_{\mathbf{m}} \sum_{\mathbf{r}^{n} \in A_{n}}
P_{\mathbf{R}^{n}\mathbf{M}|z^{n}\mathbf{w}^{n}}(\mathbf{r}^{n},\mathbf{m})
P_{\mathbf{R}^{n}\mathbf{M}|z^{n}\mathbf{w}^{n}}(\mathbf{r}^{n},\mathbf{m})}
\notag\\ &\le& \sum_{\mathbf{m}} \sum_{\mathbf{r}^{n} \in A_{n}}
P_{\mathbf{R}^{n}\mathbf{M}|z^{n}\mathbf{w}^{n}}(\mathbf{r}^{n},\mathbf{m})
\frac{1}{2^{\alpha n}} \label{eq-lemma-pa-9} \\ &\le&
\sum_{\mathbf{r}^{n},\mathbf{m}}
P_{\mathbf{R}^{n}\mathbf{M}|z^{n}\mathbf{w}^{n}}(\mathbf{r}^{n},\mathbf{m})
\frac{1}{2^{\alpha n}} \\ &=& \frac{1}{2^{\alpha n}},
\end{eqnarray}
where Eq.~(\ref{eq-lemma-pa-9}) follows from the fact that
$P_{\mathbf{R}^{n}\mathbf{M}|z^{n}\mathbf{w}^{n}}(\mathbf{r}^{n},\mathbf{m})
 \le P_{\mathbf{R}^{n}|z^{n}\mathbf{w}^{n}}(\mathbf{r}^{n}) \le
\frac{1}{2^{\alpha n}}$ for any $\mathbf{r}^{n} \in A_{n}$.  Since the
root function $\sqrt{\cdot}$ is concave function, by combining
Eqs.(\ref{eq-lemma-pa-2})--(\ref{eq-lemma-pa-9}) and averaging over $f$,
we obtain
\begin{eqnarray}
\lefteqn{\mathbb{E}_f [ \Delta_{f,z^{n},\mathbf{w}^{n}} ]}\notag \\ 
&\le&
\sqrt{|{\cal S}| |{\cal M}_{A}\times{\cal M}_{B}|
\sum_{s,\mathbf{m}}h_{n}(s,\mathbf{m})^{2}} \notag\\ 
&&+ 2
P_{\mathbf{R}^{n}|z^{n}\mathbf{w}^{n}}( \{ \mathbf{r}^{n} \in
{\cal X}_{\Delta}^{n}\times{\cal Y}_{\Delta}^{n} \mid \notag \\
&& -\frac{1}{n}\log
P_{\mathbf{R}^{n}|z^{n}\mathbf{w}^{n}}(\mathbf{r}^{n}) < \alpha \}) \notag\\ 
&\le&\sqrt{\frac{|{\cal S}| |{\cal M}_{A}\times{\cal
M}_{B}|}{2^{\alpha n}}} \notag\\ &&+ 2
P_{\mathbf{R}^{n}|z^{n}\mathbf{w}^{n}} (\{ \mathbf{r}^{n} \in
{\cal X}_{\Delta}^{n}\times{\cal Y}_{\Delta}^{n} \mid \notag \\
&& -\frac{1}{n}\log
P_{\mathbf{R}^{n}|z^{n}\mathbf{w}^{n}}(\mathbf{r}^{n}) < \alpha \}) .
\end{eqnarray}
\end{proof}
\begin{corollary}
Suppose that we set $\frac{1}{n}\log|{\cal S}| =
H(\mathbf{R}|Z\mathbf{W}) - \frac{1}{n} \log|{\cal M}_{A}\times{\cal
M}_{B}| - 2 \delta $, $\mathbb{E}_f[\Delta_{f}]$ is exponentially small
for sufficiently large $n$.
\end{corollary}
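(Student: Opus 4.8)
The plan is to apply Lemma~\ref{lemma-privacy-amplification} with a judicious choice of its free parameter $\alpha$, and then show that the two terms of the resulting bound on $\mathbb{E}_f[\Delta_f]$ each decay exponentially in $n$. Concretely, I would set
\begin{equation*}
\alpha := H(\mathbf{R}|Z\mathbf{W}) - \delta ,
\end{equation*}
so that the slack $2\delta$ in the hypothesis is split into a margin $\delta$ for the first (square-root) term and a margin $\delta$ for the second (probability) term. Here $\mathbf{R}=(X_{\Delta},Y_{\Delta})$ and $H(\mathbf{R}|Z\mathbf{W})$ is the single-letter quantity already appearing in the statement.

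The first term is immediate. Substituting the prescribed key size $\frac{1}{n}\log|\mathcal{S}| = H(\mathbf{R}|Z\mathbf{W}) - \frac{1}{n}\log|\mathcal{M}_{A}\times\mathcal{M}_{B}| - 2\delta$ into the argument of the square root gives
\begin{equation*}
\frac{|\mathcal{S}|\,|\mathcal{M}_{A}\times\mathcal{M}_{B}|}{2^{\alpha n}} = 2^{n((H(\mathbf{R}|Z\mathbf{W}) - 2\delta) - \alpha)} = 2^{-n\delta},
\end{equation*}
so the first term equals $2^{-n\delta/2}$, which is exponentially small.

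The second term is the heart of the argument. Carrying out the average against $p_{Z^{n}}$ and $P_{\mathbf{W}^{n}|z^{n}}$, I would recognize the whole second term as the genuine probability
\begin{equation*}
\Pr\left\{ -\frac{1}{n}\log P_{\mathbf{R}^{n}|Z^{n}\mathbf{W}^{n}}(\mathbf{R}^{n}|Z^{n}\mathbf{W}^{n}) < \alpha \right\}
\end{equation*}
under the joint law of $(\mathbf{R}^{n},Z^{n},\mathbf{W}^{n})$. Because the satellite emits i.i.d.\ symbols and the three Gaussian channels are memoryless, the triples $(\mathbf{R}^{(i)},Z^{(i)},\mathbf{W}^{(i)})$ are i.i.d.\ and the conditional law factorizes, so the exponent is an empirical mean $\frac{1}{n}\sum_{i=1}^{n}T_{i}$ of the i.i.d.\ self-information terms $T_{i} := -\log P_{\mathbf{R}|Z\mathbf{W}}(\mathbf{R}^{(i)}|Z^{(i)}\mathbf{W}^{(i)})$, each with mean $\mathbb{E}[T_{i}] = H(\mathbf{R}|Z\mathbf{W})$. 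Since $\alpha$ lies strictly below this mean, the event is a lower-tail deviation.

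The \emph{main obstacle} is that the weak law of large numbers only shows this probability tends to $0$, whereas the preceding lemma lower-bounds $H(S|Z^{n}\mathbf{W}^{n}\mathbf{M}F)$ by a term that multiplies $\mathbb{E}_f[\Delta_f]$ by a factor of order $\log|\mathcal{S}| = \Theta(n)$; hence I genuinely need $\mathbb{E}_f[\Delta_f] = o(1/n)$, i.e.\ exponential decay. I would therefore replace the weak law by a Chernoff bound: for every $u>0$,
\begin{equation*}
\Pr\left\{ \frac{1}{n}\sum_{i=1}^{n} T_{i} < \alpha \right\} \le \left( 2^{u\alpha}\, \mathbb{E}[2^{-u T}] \right)^{n}.
\end{equation*}
The moment generating function is finite near the origin---indeed $\mathbb{E}[2^{-uT}] = \int p_{Z}(z)\sum_{\mathbf{w}} P_{\mathbf{W}|z}(\mathbf{w})\sum_{\mathbf{r}} P_{\mathbf{R}|Z\mathbf{W}}(\mathbf{r}|z,\mathbf{w})^{1+u}\,dz \le 1$ for $u\ge 0$, because $\mathbf{R}$ takes only finitely many values and $P^{1+u}\le P$. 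Thus the bracketed factor equals $1$ at $u=0$, while the derivative of its logarithm there is $\alpha - H(\mathbf{R}|Z\mathbf{W}) = -\delta < 0$; consequently it is strictly less than $1$ for all sufficiently small $u>0$, yielding an exponential bound $2^{-n\beta}$ with $\beta>0$. Combining the two exponentially small contributions shows that $\mathbb{E}_f[\Delta_f]$ is exponentially small for sufficiently large $n$, which is precisely what the entropy lemma needs to conclude the security condition.
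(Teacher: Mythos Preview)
Your proposal is correct and follows essentially the same approach as the paper: set $\alpha = H(\mathbf{R}|Z\mathbf{W})-\delta$ in Lemma~\ref{lemma-privacy-amplification}, verify the first term is $2^{-n\delta/2}$ by direct substitution, and control the second term via the Chernoff bound applied to the i.i.d.\ conditional self-informations. Your version is in fact more careful than the paper's, which merely cites the Chernoff bound without checking that the moment generating function is finite or that the exponent is strictly negative.
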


\begin{proof}
Suppose that we set $\alpha = H(\mathbf{R}|Z\mathbf{W}) - \delta$ for
$\delta >0$, the second term of Eq.~(\ref{eq-lemma-pa-0}) exponentially
tends to $0$ as $n \to \infty$ by using the Chernoff bound \cite{cover}.
On the other hand, suppose that we set $\frac{1}{n}\log|{\cal S}| =
H(\mathbf{R}|Z\mathbf{W}) - \frac{1}{n} \log|{\cal M}_{A}\times{\cal
M}_{B}| - 2 \delta $, the first term of Eq.~(\ref{eq-lemma-pa-0}) is
$e^{-\delta n}$ and tends to $0$ as $n \to \infty$.  Thus, suppose that
we set $\frac{1}{n}\log|{\cal S}| = H(\mathbf{R}|Z\mathbf{W}) -
\frac{1}{n} \log|{\cal M}_{A}\times{\cal M}_{B}| - 2 \delta $,
$\mathbb{E}_f [ \Delta_f ]$ exponentially tends to $0$ as $n \to
\infty$.
\end{proof}
If $\mathbb{E}_f [ \Delta_f ]$ is exponentially small, then the security
of the protocol in the sense of entropy is guaranteed by lemma 3.  From
this fact and corollary 5, suppose that we set $\frac{1}{n}\log|{\cal
S}| < H(X_{\Delta}Y_{\Delta}|ZW_{A}W_{B}) - \frac{1}{n} \log|{\cal
M}_{A}\times{\cal M}_{B}|$, Eq.~(\ref{eq:security}) is satisfied for
sufficiently large $n$.




\end{document}